\newtheorem{acorollary}{Corollary}
\newtheorem{atheorem}{Theorem}
\theoremstyle{definition}
\newtheorem{adefinition}{Definition}
\newtheorem{aexample}{Example}
\newtheorem{aremark}{Remark}
\begin{document}

\author{Ata Atay\thanks{Department of Mathematical Economics, Finance and Actuarial Sciences, University of Barcelona, Spain. E-mail: aatay@ub.edu} \and Ana Mauleon\thanks{CEREC and CORE/LIDAM, UCLouvain, Belgium. E-mail: ana.mauleon@usaintlouis.be} \and Vincent Vannetelbosch\thanks{CORE/LIDAM, UCLouvain, Belgium. E-mail: vincent.vannetelbosch@uclouvain.be}}
\title{\textbf{Limited Farsightedness in \\
Priority-Based Matching}}
\date{December 12, 2022}
\maketitle

\begin{abstract}
We consider priority-based matching problems with limited farsightedness. We show that, once agents are sufficiently farsighted, the matching obtained from the Top Trading Cycles (TTC) algorithm becomes stable: a singleton set consisting of the TTC matching is a horizon-$k$ vNM stable set if the degree of farsightedness is greater than three times the number of agents in the largest cycle of the TTC. On the contrary, the matching obtained from the Deferred Acceptance (DA) algorithm may not belong to any horizon-$k$ vNM stable set for $k$ large enough.\\

Keywords: priority-based matching;  top trading cycle; stable sets; limited farsightedness.\newline
JEL classification: C70, C78, D47, D61.
\end{abstract}

\thispagestyle{empty}\newpage

\pagenumbering{arabic}

\section{Introduction}\label{S1}

Many objects such as houses, school seats, jobs or organs are allocated based on the preferences of the agents and their priorities.\footnote{Roth and Sotomayor (1990) and Haeringer (2017) provide a general introduction to matching problems.} Two approaches for analysing the stability of matchings have been proposed in the literature depending on whether and how far agents anticipate that their actions may also induce others to change their matches. On the one hand, standard stability concepts involves fully myopic agents in the sense that they do not anticipate that others might react to their actions. On the other hand, a number of solution concepts involve perfectly farsighted agents who fully anticipate the complete sequence of reactions that results from their own actions. However, experimental evidence suggests that subjects are consistent with an intermediate degree of farsightedness: agents only anticipate a limited number of reactions by the other agents to the actions they take themselves.\footnote{See e.g. Kirchsteiger, Mantovani, Mauleon and Vannetelbosch (2016).} 

Two  prominent mechanisms used for priority-based matching are the Gale and Shapley's (1962) Deferred Acceptance (DA) mechanism and the Shapley and Scarf's (1974) Top Trading Cycles (TTC) mechanism. Abdulkadiro\u{g}lu and Sönmez (2003) show that both mechanisms are strategy-proof: truthful preference revelation is a weakly dominant strategy for the agents.\footnote{Dubins and Freedman (1981) and Roth (1982) were first to show that the DA mechanism satisfies strategy-proofness in one-to-one matching problems.}  But, on the one hand, the TTC mechanism is Pareto efficient while the DA mechanism may select an inefficient matching. On the other hand, the DA mechanism is stable while the TTC mechanism may select an unstable matching.\footnote{Reny (2022) introduces the Priority-Efficient (PE) mechanism that always selects a Pareto efficient matching that dominates the DA stable matching, but PE is not strategy-proof. See also Kesten (2006, 2010).} 

In one-to-one priority-based matching problems, is it possible to stabilize the matching obtained from the TTC algorithm when agents are limited farsighted? If yes, how much farsightedness from the agents do we need? To answer these questions we propose the notion of horizon-$k$ vNM stable set to study the matchings that are stable when agents are limited in their degree of farsightedness.\footnote{See Chwe (1994), Mauleon, Vannetelbosch and Vergote (2011), Ray and Vohra (2015, 2019), Herings, Mauleon and Vannetelbosch (2019, 2020), Luo, Mauleon and Vannetelbosch (2021) for definitions of the farsighted stable set.} A horizon-$k$ improving path for priority-based matching problems is a sequence of matchings that can emerge when limited farsighted agents form or destroy matches based on the improvement the $k$-steps ahead matching offers them relative to the current matching. A set of matchings is a horizon-$k$ vNM stable set if (Internal Stability) for any two matchings belonging to the set, there is no horizon-$k$ improving path from one matching to the other one, and (External Stability) there always exists a horizon-$k$ improving path from every matching outside the set to some matching within the set.

We show that, once agents are sufficiently farsighted, the matching obtained from the TTC algorithm becomes stable in one-to-one priority-based matching problems. Precisely, a singleton set consisting of the TTC matching is a horizon-$k$ vNM stable set if the degree of farsightedness is greater than three times the number of agents in the largest cycle of the TTC. We provide a constructive proof where we build step by step a horizon-$k$ improving path from any matching leading to the TTC matching. Along the horizon-$k$ improving path, agents move one at a time and agents belonging to cycles sequentially act in the order of the formation of cycles in the TTC algorithm. Looking forward $k$ steps ahead, agents belonging to a cycle first match one by one to the object that ranks them first on their priority list. Second, they give up one by one their object, and by doing so, vacating the object. Third, they match one by one to the object they are assigned to in the TTC matching. The number of steps in this improving path is at most equal to three times the number of agents in the largest cycle of the TTC. Hence, looking forward such a number of steps ahead allows the agents to anticipate ending up with their TTC matches; and by doing so, they have incentives for engaging a move towards the matches they have in the TTC matching. Finally, we show that, in the special case where each agent owns an object, a singleton set consisting of the TTC matching is the unique horizon-$k$ vNM stable set. 

Thus, the matching obtained from the TTC algorithm is not only Pareto efficient and strategy-proof, it is also horizon-$k$ stable. On the contrary, the matching obtained from the DA algorithm may not belong to any horizon-$k$ vNM stable set for $k$ large enough. Next, we provide a tighter bound on the length of the largest cycle that is sufficient for stabilizing the TTC matching. However, this tighter bound requires more coordination on behalf of the agents. We also show that our main result is robust to an alternative concept for limited farsightedness, which is obtained by adapting Herings, Mauleon and Vannetelbosch (2019) definition of a horizon-$L$ farsighted set of networks to priority-based matching problems. This concept mainly replaces the internal stability condition of the horizon-$k$ vNM stable set by two alternative conditions: deterrence of external deviations and minimality.

Our results complement Atay, Mauleon and Vannetelbosch (2022) who show that for school choice problems with farsighted students, a singleton set consisting of the TTC matching is a farsighted stable set.\footnote{Morrill (2015) and Hakimov and Kesten (2018) propose variations of the TTC for school choice problems. Atay, Mauleon and Vannetelbosch (2022) show that the matchings obtained from the variations are all farsightedly stable. For one-to-one priority-based problems, all variations coincide.}  In addition, they also find that the DA matching may not belong to any farsighted stable set, and so the TTC mechanism provides an assignment that is not only Pareto efficient but also farsightedly stable for many-to-one priority-based matching problems. Notice that Atay, Mauleon and Vannetelbosch (2022) allow group of students (and schools) to move all together. Here, we show that, in one-to-one priority-based matching problems, their main result is robust to one agent moving at a time and limited farsightedness. 

There is a recent literature that provides theoretical and/or empirical results supporting the TTC mechanism in one-to-one priority-based matching problems. For instance, Abdulkadiro\u{g}lu, Che, Pathak, Roth and Tercieux (2020) show that the TTC mechanism is justified envy minimal in the class of Pareto efficient and strategy-proof mechanisms. In addition, Do\u{g}an and Ehlers (2022) find that, for any stability comparison satisfying three basic properties, the TTC mechanism is minimally unstable among Pareto efficient and strategy-proof mechanisms.\footnote{Do\u{g}an and Ehlers (2021) study efficient and minimally unstable Pareto improvements over the DA mechansism.} 

The paper is organized as follows. In Section \ref{S2}, we introduce priority-based matching problems. In Section \ref{S3}, we provide a formal description of the TTC mechanism and its algorithm. In Section \ref{S4}, we introduce the notions of horizon-$k$ improving path and horizon-$k$ stable set for priority-based matching problems. In Section \ref{S5} we provide our main result. In Section \ref{S6}, we discuss some extensions: (i) a tighter bound on the degree of farsightedness, (ii) all agents are farsighted, (iii) an alternative concept of limited farsightedness, and (iv) the matching problem where each agent owns an object.

\section{Priority-based matching problems}\label{S2} 

A priority-based matching problem is a list $\langle I,S,P,F\rangle$ where 
\begin{itemize}
\item[(\textbf{i})] $I=\{i_1,...,i_n\}$ is the set of agents,
\item[(\textbf{ii})] $S=\{s_1,...,s_m\}$ is the set of objects,
\item[(\textbf{iv})] $P=(P_{i_1},...,P_{i_n})$ is the preference profile where $P_i$ is the strict preference of agent $i$ over the objects and her outside option,
\item[(\textbf{v})] $F=(F_{s_1},...,F_{s_m})$ is the strict priority structure of the objects over the agents.
\end{itemize}

Let $i$ be a generic agent and $s$ be a generic object. The preference $P_i$ of agent $i$ is a linear order over $S \cup \{i\}$. Agent $i$ prefers object $s$ to object $s^\prime$ if $s P_i s^\prime $. Object $s$ is acceptable to agent $i$ if $s P_i i$ (i.e. $s P_i i $ means that $i$ strictly prefers $s$ to being unassigned). We often write $P_i = s,s^\prime,s^{\prime \prime}$ meaning that agent $i$'s most preferred object is $s$, her second best is $s^\prime$, her third best is $s^{\prime \prime}$ and any other object is unacceptable for her. Let $R_i$ be the weak preference relation associated with the strict preference relation $P_i$.

The priority $F_s$ of object $s$ is a linear order over $I$. That is, $F_s$ assigns ranks to agents according to their priority for object $s$. The rank of agent $i$ for object $s$ is denoted by $F_s(i)$ and $F_s(i) < F_s(j)$ means that agent $i$ has higher priority for object $s$ than agent $j$. For $s \in S, i \in I$, let $\Phi(s,i) = \{j \in I \mid F_s(j) < F_s(i)\}$ be the set of agents who have higher priority than agent $i$ for object $s$.

An outcome of a priority-based matching problem is a matching $\mu : I \rightarrow S \cup I$ such that for any $i \in I$ and any $s \in S$,
\begin{itemize}
\item[(\textbf{i})]$\mu(i) \notin S \Rightarrow \mu(i)=i$, and
\item[(\textbf{ii})]$\#(\mu^{-1}(s)) \leq 1$.
\end{itemize}

Condition (i) means that agent $i$ is assigned to object $s$ under $\mu$ if $\mu(i)=s$ and is unassigned under $\mu$ if $\mu(i)=i$. Condition (ii) requires that no object is assigned to more than one agent. The set of all matchings is denoted $\mathcal{M}$.\footnote{Throughout the paper we use the notation $\subseteq$ for weak inclusion and $\subset$ for strict inclusion. Finally, $\#$ will refer to the notion of cardinality.} For instance, 
\[\mu=\Big( \begin{array}{cccc} i_1 & i_2 & i_3 & i_4 \\ s_2 & s_3 & s_1 & i_4\end{array}\Big)\] is the matching where agent $i_1$ is assigned to object $s_2$, agent $i_2$ is assigned to object $s_3$, agent $i_3$ is assigned to object $s_1$, and agent $i_4$ is unassigned. For convenience, we often write such matching as $\mu = \{(i_1,s_2),(i_2,s_3),(i_3,s_1),(i_4,i_4)\}$.

A matching $\mu^\prime$ Pareto dominates a matching $\mu$ if $\mu^\prime (i) R_i \mu(i)$ for all $i \in I$ and $\mu^\prime (j) P_j \mu(j)$ for some $j \in I$. A matching is Pareto efficient if it is not Pareto dominated by another matching. Let $\mathcal{E}(I,S,P,F)$ be the set of Pareto efficient matchings. A matching $\mu$ is stable if 
\begin{itemize}
\item[(\textbf{i})]for all $i \in I$ we have $\mu(i) R_i i$ (individual rationality),
\item[(\textbf{ii})]for all $i \in I$ and all $s \in S$, if $ s P_i \mu(i)$ then $\#(\mu^{-1}(s)) = 1$ (non-wastefulness),
\item[(\textbf{iii})]for all $i,j \in I$ with $\mu(j)=s$, if $\mu(j) P_i \mu(i)$ then $j \in \Phi(s,i)$ (no justified envy).
\end{itemize}
Let $\mathcal{S}(I,S,P,F)$ be the set of stable matchings.

A mechanism systematically selects a matching for any given priority-based matching problem $\langle I,S,P,F \rangle$.  A mechanism is individually rational (non-wasteful / stable / Pareto efficient) if it always selects an individually rational (non-wasteful / stable / Pareto efficient) matching. A mechanism is strategy-proof if no agent can ever benefit by unilaterally misrepresenting her preferences.

\section{The Top Trading Cycles algorithm}\label{S3} 

Abdulkadiro\u{g}lu and Sönmez (2003) introduce the Top Trading Cycles (TTC) mechanism for selecting a matching for general priority-based matching problems. In the case of a priority-based matching problem $\langle I,S,P,F \rangle$, the TTC mechanism finds a matching by means of the following TTC algorithm.

\begin{itemize}
\item[Step $1$.] Each agent $i \in I$ points to the object that is ranked first in $P_i$. If there is no such object, then agent $i$ points to herself and she forms a self-cycle. Each object $s \in S$ points to the agent that has the highest priority in $F_s$. Since the number of agents and objects are finite, there is at least one cycle. A cycle is an ordered list of distinct objects and distinct agents $(s^1,i^1,s^2,...,s^l,i^l)$ where $s^1$ points to $i^1$ (denoted $s^1 \mapsto i^1$), $i^1$ points to $s^2$ ($i^1 \mapsto s^2$), $s^l$ points to $i^l$ ($s^l \mapsto i^l$) and $i^l$ points to $s^1$ ($i^l \mapsto s^1$). Each object (agent) can be part of at most one cycle. Every agent in a cycle is assigned to the object she points to and she is removed. Similarly, every agent in a self-cycle is not assigned to any object and is removed. If an object $s$ is part of a cycle, then $s$ is removed. Let $C_1=\{c_1^1,c_1^2,...,c_1^{L_1}\}$ be the set of cycles in Step $1$ (where $L_1 \geq 1$ is the number of cycles in Step $1$). Let $I_1$ be the set of agents who are assigned to some object at Step $1$. Let $S_1$ be the set of objects that are assigned to some agent at Step $1$. Let $m_1^l$ be all the matches from cycle $c_1^l$ that are formed in Step $1$ of the algorithm:
\begin{equation*}
m_1^l=\left\lbrace
\begin{matrix}
\{(i,s) \mid i,s \in c_1^l \text{ and } i\mapsto s \} & \text{if} & c_1^l \neq (j) \\
\{(j,j) \} & \text{if} & c_1^l = (j)
\end{matrix}\right.
\end{equation*}
where $(j,j)$ simply means that agent $j$ who is in a self-cycle ends up being definitely unassigned to any object. Let $M_1 = \cup_{l=1}^{L_1} m_1^l$ be all the matches between agents and objects formed in Step $1$ of the algorithm. Let $\bar{c}_1^l = \# \{i \in I \mid i \in c_1^l\}$ be the number of agents involved in cycle $c_1^l$, for $l=1,...,L_1$. Let $c_1^{\max} = \max \{\bar{c}_1^1,...,\bar{c}_1^{L_1}\}$.

\item[Step $k\geq 2$.] Each remaining agent $i \in I \setminus \cup_{l=1}^{k-1}I_l$ points to the object $s \in S \setminus \cup_{l=1}^{k-1}S_l$ that is ranked first in $P_i$. If there is no such object, then agent $i$ points to herself and she forms a self-cycle. Each object $s \in S \setminus \cup_{l=1}^{k-1}S_l$ points to the agent $j \in I \setminus \cup_{l=1}^{k-1}I_l$ that has the highest priority in $F_s$. There is at least one cycle. Every agent in a cycle is assigned to the object she points to and she is removed. Similarly, every agent in a self-cycle is not assigned to any object and is removed. If an object $s$ is part of a cycle, then $s$ is removed. Let $C_k=\{c_k^1,c_k^2,...,c_k^{L_k}\}$ be the set of cycles in Step $k$ (where $L_k \geq 1$ is the number of cycles in Step $k$). Let $I_k$ be the set of agents who are assigned to some object at Step $k$. Let $S_k$ be the set of objects that are assigned to some agent at Step $k$. Let $m_k^l$ be all the matches from cycle $c_k^l$ that are formed in Step $k$ of the algorithm:
\begin{equation*}
m_k^l=\left\lbrace
\begin{matrix}
\{(i,s) \mid i,s \in c_k^l \text{ and } i\mapsto s \} & \text{if} & c_k^l \neq (j) \\
\{(j,j) \} & \text{if} & c_k^l = (j)
\end{matrix}\right.
\end{equation*}
Let $M_k = \cup_{l=1}^{L_k} m_k^l$ be all the matches between agents and objects formed in Step $k$ of the algorithm. Let $\bar{c}_k^l = \# \{i \in I \mid i \in c_k^l\}$ be the number of agents involved in cycle $c_k^l$, for $l=1,...,L_k$. Let $c_k^{\max} = \max \{\bar{c}_k^1,...,\bar{c}_k^{L_k}\}$.

\item[End.] The algorithm stops when all agents have been removed. Let $\bar{k}$ be the step at which the algorithm stops. Let $\mu^T$ denote the matching obtained from the Top Trading Cycles mechanism and it is given by $\mu^T = \cup_{k=1}^{\bar{k}} M_k$. Let $\gamma = \max \{c_{1}^{\max},...,c_{\bar{k}}^{\max}\}$ be the maximum number of agents involved in any cycle of the TTC.
\end{itemize}

Abdulkadiro\u{g}lu and Sönmez (2003) show that the TTC mechanism is Pareto efficient and strategy-proof. TTC is also individually rational and non-wasteful, but it is not stable.

Another mechanism that is commonly adopted all over the world is Gale and Shapley's Deferred Acceptance (DA) algorithm. Let $\mu^D$ denote the matching obtained from the DA mechanism. The DA mechanism is strategy-proof and stable but not Pareto efficient.\footnote{Che and Tercieux (2019) show that both Pareto efficiency and stability can be achieved asymptotically using DA and TTC mechanisms when agents have uncorrelated preferences.} 

\section{Horizon-$k$ vNM Stable Set}\label{S4} 

Is it possible to stabilize the matching obtained from the TTC algorithm once agents become limited farsighted? If yes, how much farsightedness from the agents do we need? To answer this question we propose the notion of horizon-$k$ vNM stable set for priority-based matching problems to study the matchings that are stable when agents are limited in their degree of farsightedness. 

A horizon-$k$ improving path for priority-based matching problems is a sequence of matchings that can emerge when limited farsighted agents form or destroy matches based on the improvement the $k$-steps ahead matching offers them relative to the current matching. A set of matchings is a horizon-$k$ vNM stable set if (IS) for any two matchings belonging to the set, there is no horizon-$k$ improving path connecting from one matching to the other one, and (ES) there always exists a horizon-$k$ improving path from every matching outside the set to some matching within the set.

Given a matching $\mu \in \mathcal{M} $ with agent $i \in I $ assigned to object $s \in S, $ so $\mu(i) = s, $ the matching $\mu^{\prime} $ that is identical to $\mu, $ except that the match between $i$ and $s$ has been destroyed by either $i$ or $s$, is denoted by $\mu^{\prime} =\mu - (i,s)$. Given a matching $\mu \in \mathcal{M}$ such that $i \in I$ and $s \in S$ are not matched to one another, the matching $\mu^{\prime}$ that is identical to $\mu $, except that the pair $(i,s)$ has formed at $\mu^{\prime}$ (and some $j = \mu^{-1}(s)$ becomes unassigned if $\#\mu^{-1}(s)=1$), is denoted by $\mu^{\prime} = \mu + (i,s) - \{(i,\mu(i)) \mid \mu(i) \neq i\} - \{(\mu^{-1}(s),s) \mid \#(\mu^{-1}(s))=1\}$.

\begin{adefinition}\label{DMFLI} 
Let $\langle I,S,P,F\rangle$ be a priority-based matching problem. A horizon-$k$ improving path from a matching $\mu \in \mathcal{M}$ to a matching $\mu^{\prime } \in \mathcal{M} \setminus \{\mu \}$ is a finite sequence of distinct matchings $\mu_{0},\ldots ,\mu_{L}$ with $\mu_{0} = \mu$ and $\mu_{L} = \mu^{\prime }$ such that for every $l \in \{0,\ldots ,L-1\}$ either
\begin{itemize}
\item[(\textbf{i})]$ \mu_{l + 1} = \mu_{l} - (i,s)$ for some $(i,s) \in I \times S$ such that $\mu_{\min \{l+k,L\}}(i) P_i \mu _{l}(i)$, or
\item[(\textbf{ii})]$ \mu_{l + 1} = \mu_{l} + (i,s) - \{(i,\mu_{l}(i)) \mid \mu_{l}(i) \neq i\} - \{(\mu_{l}^{-1}(s),s) \mid \#(\mu_{l}^{-1}(s))=1\}$ for some $(i,s) \in I \times S$ such that $\mu_{\min \{l+k,L\}}(i) P_i \mu _{l}(i)$ and $F_s(i) < F_s(\mu_{l}^{-1}(s))$ if $\#(\mu_{l}^{-1}(s)) = 1$.
\end{itemize}
\end{adefinition}

Definition \ref{DMFLI} tells us that a horizon-$k$ improving path for priority-based matching problems consists of a sequence of matchings where along the sequence agents form or destroy matches based on the improvement the $k$-steps ahead matching offers them relative to the current one. Precisely, along a horizon-$k$ improving path, each time some agent $i$ is on the move she is comparing her current match (i.e. $\mu_l(i)$) with the match she will get $k$-steps ahead on the sequence (i.e. $\mu_{l+k}(i)$) except if the end matching of the sequence lies within her horizon (i.e. $L < l+k$). In such a case, she simply compares her current match (i.e. $\mu_l(i)$) with the end match (i.e. $\mu_L$). 

Objects can be assigned to any agent on their priority lists unless they have already been assigned to some agent. When an object $s \in S$ is already assigned to some agent $\mu_l^{-1}(s)$ at $\mu_l$, this object $s$ can be reassigned to another agent $\mu^{-1}_{l+1}(s) \neq \mu_l(s)$ at $\mu_{l+1}$ only if agent $\mu^{-1}_{l+1}(s)$ has a higher priority than agent $\mu_l^{-1}(s)$. 

Let some $\mu \in \mathcal{M}$ be given. If there exists a horizon-$k$ improving path from a matching $\mu$ to a matching $\mu^{\prime}$, then we write $\mu \rightarrow_k \mu^{\prime}$. The set of matchings $\mu^{\prime }\in \mathcal{M}$ such that there is a horizon-$k$ improving path from $\mu$ to $\mu^{\prime }$ is denoted by $\phi_k(\mu)$, so $\phi_k(\mu)=\{\mu^{\prime} \in \mathcal{M} \mid \mu \rightarrow_k \mu^{\prime} \}$.

\begin{adefinition} \label{D6} Let $\langle I,S,P,F\rangle$ be a priority-based matching problem. A set of matchings $V \subseteq \mathcal{M}$ is a horizon-$k$ vNM stable set if it satisfies:
\begin{itemize}
\item[(\textbf{i})] Internal stability (IS): For every $\mu, \mu^{\prime} \in V$,
it holds that $\mu^{\prime} \notin \phi_k(\mu).$
\item[(\textbf{ii})] External stability (ES): For every $\mu \in \mathcal{M} \setminus V$, it holds that $\phi_k(\mu) \cap V \neq \emptyset. $
\end{itemize}
\end{adefinition}

Condition (i) of Definition \ref{D6} corresponds to internal stability. For any two matchings $\mu $ and $\mu^{\prime }$ in the horizon-$k$ vNM stable set $V$ there is no horizon-$k$ improving path connecting $\mu$ to $\mu^{\prime }$. Condition (ii) of Definition \ref{D6} expresses external stability. There exists a horizon-$k$ improving path from every matching $\mu$ outside the horizon-$k$ vNM stable set $V$ to some matching in $V$.\footnote{Ehlers (2007) and Herings, Mauleon and Vannetelbosch (2017) study vNM stable sets when all agents are myopic in two-sided matching problems.}

\section{Main Result}\label{S5} 

Remember that $\gamma = \max \{c_{1}^{\max},...,c_{\bar{k}}^{\max}\}$ is the maximum number of agents involved in any cycle of the TTC.

\begin{atheorem}\label{T5}
Let $\langle I,S,P,F\rangle$ be a priority-based matching problem and $\mu^T$ be the matching obtained from the Top Trading Cycles mechanism. The singleton set $\{\mu^T\}$ is a horizon-$k$ stable set for $k\geq (3\gamma -1)$.
\end{atheorem}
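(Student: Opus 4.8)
The plan is to verify the two conditions of Definition~\ref{D6} for $V=\{\mu^T\}$. Internal stability is immediate: by Definition~\ref{DMFLI} a horizon-$k$ improving path ends at a matching distinct from its origin, so every element of $\phi_k(\mu^T)$ differs from $\mu^T$; hence $\mu^T\notin\phi_k(\mu^T)$ and (IS) holds vacuously for a singleton. All the work lies in external stability, i.e. in showing $\mu^T\in\phi_k(\mu)$ for every $\mu\in\mathcal{M}\setminus\{\mu^T\}$. I would establish this constructively, exhibiting for an arbitrary $\mu$ an explicit horizon-$k$ improving path $\mu=\mu_0,\mu_1,\ldots,\mu_L=\mu^T$.

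The construction processes the TTC cycles in their order of formation (step $1$ first, then step $2$, and so on, breaking ties arbitrarily among cycles of the same step), settling one entire cycle before passing to the next. For a cycle $(s^1,i^1,\ldots,s^c,i^c)$ I would use the three-phase maneuver announced in the introduction: in Phase~1 each $i^j$ forms the match $(i^j,s^j)$ with the object ranking her first; in Phase~2 each $i^j$ destroys $(i^j,s^j)$, vacating $s^j$; in Phase~3 each $i^j$ forms $(i^j,s^{j+1})$, her TTC match. Legality is the first thing to check. Since $s^j\mapsto i^j$ in the algorithm, $i^j$ has the highest $F_{s^j}$-priority among the agents still present at that step; because earlier cycles have already been settled on their TTC matches, none of which is $s^j$, the holder of $s^j$ is always an agent of the current or a later step, so $F_{s^j}(i^j)<F_{s^j}(\mu_l^{-1}(s^j))$ and the Phase~1 grab meets clause~(ii) of Definition~\ref{DMFLI}. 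Each $s^{j+1}$ is freed by $i^{j+1}$ in Phase~2, so every Phase~3 grab is onto an unassigned object; the Phase~2 destructions are legal by clause~(i).

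The crux is that every move is also \emph{improving}, i.e. the moving agent prefers her match $k$ steps ahead (namely her TTC match, which she keeps once reached) to her current one. Here I would invoke the monotonicity built into TTC together with the processing order: the object $s^{j+1}$ assigned to $i^j$ is her most-preferred object among those still available at her step, and every object she strictly prefers to $s^{j+1}$ was removed earlier and is therefore, by the induction on cycle order, already held by its $\mu^T$-owner, an agent of an earlier cycle. Consequently $i^j$ never currently holds anything she prefers to $s^{j+1}$, so $s^{j+1} P_{i^j} \mu_l(i^j)$ at each of her Phase~1, Phase~2 and Phase~3 moves (if $\mu_l(i^j)=s^{j+1}$ already she is simply skipped); agents in self-cycles, whose current holdings are unacceptable to them, analogously improve by becoming unassigned. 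Agents displaced by a Phase~1 grab belong to later cycles and are left temporarily unassigned, to be collected when their own cycle is reached, so the path terminates exactly at $\mu^T$.

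Finally I would carry out the horizon bookkeeping, which is where the bound enters and which I expect to be the main obstacle. The maneuver for a cycle of $c\le\gamma$ agents contributes $3c$ moves, so each agent's TTC match is reached within the block of at most $3\gamma$ matchings associated with her cycle; the agent acting first in Phase~1 and receiving her TTC match only in Phase~3 requires the longest look-ahead. I would order the moves within the phases so that this largest gap between an agent's first move and the realization of her TTC match is at most $3\gamma-1$; then for $k\ge 3\gamma-1$ one has $\mu_{\min\{l+k,L\}}(i)=\mu^T(i)$ at every move $l$ at which $i$ acts, whence $\mu_{\min\{l+k,L\}}(i) P_i \mu_l(i)$ by the improving argument above, the clipping at $L$ only helping for the final cycle. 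Pinning this constant down exactly—rather than the cruder $3\gamma$ read off the raw maneuver length—is the delicate point, and it is precisely the coordination that the later tighter bound sharpens further.
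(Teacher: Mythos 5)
Your construction is, in outline, exactly the paper's: internal stability is vacuous for a singleton, and external stability is proved by an explicit horizon-$k$ improving path that settles the TTC cycles in their order of formation via the grab-priority-object / vacate / grab-TTC-object maneuver, with legality resting on $s^j \mapsto i^j$ plus the fact that earlier cycles are already settled, and improvement resting on the induction that every object an agent prefers to her TTC assignment is already held by its $\mu^T$-owner. However, the point you yourself flag as the delicate one is a genuine gap, and it concerns precisely the content of the theorem (the constant $3\gamma-1$). In your maneuver all $c$ agents of a cycle vacate in Phase 2, so the block has $3c$ moves, and with unaligned phase orderings the agent who moves first in Phase 1 and receives her TTC object last in Phase 3 needs look-ahead $3c$, not $3c-1$; you assert that the moves can be ordered so that the worst gap is at most $3\gamma-1$, but you never exhibit such an ordering. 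The assertion is in fact true: if Phase 3 processes the agents in the same order as their first moves in Phase 1, the worst gap drops to $2c+1 \leq 3c-1$ for $c \geq 2$ (this alignment is exactly what the paper exploits in its Section 6 discussion to get the tighter bound $2\gamma+1$). The paper's own proof instead closes the gap structurally and without any cross-phase coordination: in its Step B only $c-1$ agents vacate, and the remaining agent pivots directly from her priority object to her TTC object as the first move of Step C, so the whole block has at most $3c-1$ moves and the bound $k \geq 3\gamma-1$ holds for arbitrary orderings. Either fix works; your write-up contains neither.

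A second, smaller inaccuracy: you claim that agents displaced by a Phase 1 grab belong to later cycles. They can belong to the current cycle, namely the agents who start out holding their own TTC match $s^{j+1}$, which is the priority object of $i^{j+1}$. Such an agent can never move voluntarily while she holds $s^{j+1}$, since her current match equals her $k$-step-ahead match and clause (ii) of Definition \ref{DMFLI} requires strict preference; so "she is simply skipped" must mean that she is forcibly displaced by $i^{j+1}$'s grab and only then rejoins the maneuver. The paper handles this by restricting Step A movers to agents with $\mu(i) \neq \mu^T(i)$ and letting the displacements cascade around the cycle, which also shows that Step A terminates with every cycle member on her priority object. Your sketch needs the same care to guarantee that no agent is ever asked to make a non-improving move and that the path nevertheless ends exactly at $\mu^T$.
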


\begin{proof}
Since $\{\mu^T\}$ is a singleton set, internal stability (IS) is satisfied. (ES) Take any matching $\mu \neq \mu^T$, we need to show that $\phi_k(\mu) \ni \mu^T$ for $k\geq (3\gamma -1)$. We build in steps a horizon-$k$ improving path from $\mu$ to $\mu^T$ for $k\geq (3\gamma -1)$.
\begin{itemize}
\item[Step 1.1.]If $m_1^1 \subseteq \mu$ and $1 \neq L_1$ then go to Step 1.2 with $\mu_{1,1}^{\prime \prime \prime} = \mu$. If $m_1^1 \subseteq \mu$ and $1 = L_1$ then go to Step 1.End with $\mu_{1,L_1}^{\prime \prime \prime} = \mu$. If $m_1^1 \nsubseteq \mu$ then go to Step 1.1.A.

\item[Step 1.1.A.]If $\{(i,s) \mid i,s \in c_1^1 \text{ and } s \mapsto i\} \subseteq \mu$ then go to Step 1.1.B with $\mu_{1,1}^{\prime} = \mu$. If $\{(i,s) \mid i,s \in c_1^1 \text{ and } s \mapsto i\} \nsubseteq \mu$ then there is some agent $i$ such that $s \neq \mu(i) \neq \mu^T(i)$ and $s \mapsto i$ with $i,s \in c_1^1$. This agent $i$ matches with object $s$ that ranks her first on its priority list. We reach the matching $\mu_{1,1,1}^{\prime}  = \mu + (i,s) - \{(i,\mu(i)) \mid \mu(i) \neq i\} - \{(\mu^{-1}(s),s) \mid \#(\mu^{-1}(s))=1\}$ where $s \mapsto i$ and $i,s \in c_1^1$. If $\{(i,s) \mid i,s \in c_1^1 \text{ and } s \mapsto i\} \subseteq \mu_{1,1,1}^{\prime} $ then go to Step 1.1.B with $\mu_{1,1}^{\prime} = \mu_{1,1,1}^{\prime} $. If $\{(i,s) \mid i,s \in c_1^1 \text{ and } s \mapsto i\} \nsubseteq \mu_{1,1,1}^{\prime} $ then there is some agent $i^{\prime}$ such that $s^{\prime} \neq \mu_{1,1,1}^{\prime}(i^{\prime}) \neq \mu^T(i^{\prime})$ and $s^{\prime} \mapsto i^{\prime}$ with $i^{\prime},s^{\prime} \in c_1^1$. This agent $i^{\prime}$ matches with object $s^{\prime}$ that ranks her first on its priority list. We reach the matching $\mu_{1,1,2}^{\prime}  = \mu_{1,1,1}^{\prime}  + (i^{\prime},s^{\prime}) - \{(i^{\prime},\mu_{1,1,1}^{\prime}(i^{\prime})) \mid \mu_{1,1,1}^{\prime}(i^{\prime})  \neq i^{\prime}\} - \{(\mu_{1,1,1}^{\prime -1}(s^{\prime}),s^{\prime}) \mid \#(\mu_{1,1,1}^{\prime -1}(s^{\prime}))=1\}$ where $s^{\prime} \mapsto i^{\prime}$ and $i^{\prime},s^{\prime} \in c_1^1$. We proceed as above until we reach the matching $\mu_{1,1}^{\prime } = \mu + \{(i,s) \mid i,s \in c_1^1 \text{ and } s \mapsto i\} - \{(i,\mu(i)) \mid i,s \in c_1^1 \text{, } s \mapsto i \text{ and } \mu(i) \neq s\} - \{(\mu^{-1}(s),s) \mid i,s \in c_1^1 \text{, } s \mapsto i \text{ and } \mu^{-1}(s) \neq i\}$ where each agent involved in $c_1^1$ is matched to the object that ranks her first on its priority list. Step 1.1.A counts at most $\bar{c}^1_1$ steps.

\item[Step 1.1.B.]Let $\mathcal{I}_1^1=\{(i^l)\}_{l=1}^{\bar{c}_1^1}$ be such that $i^l \in c_1^1$ and $i^l = i_{o_l} \neq i^{l+1} = i_{o_{l+1}}$ with $o_{l} < o_{l+1}$ for $l=1,...,\bar{c}_1^1-1$. That is, $\mathcal{I}_1^1$ is an ordered set of the agents involved in cycle $c_1^1$ where $\bar{c}_1^1=\#\{i \in I \mid i \in c_1^1\}$ is the number of agents involved in cycle $c_1^1$. From the matching $\mu_{1,1}^{\prime }$, agents $i^1$ to $i^{\bar{c}_1^1-1}$ successively leave their objects to reach the matching $\mu_{1,1}^{\prime \prime} = \mu_{1,1}^{\prime} - \{(i,s) \mid i,s \in c_1^1 \text{, } s \mapsto i \text{ and } i \neq i^{\bar{c}_1^1}\} $ where only agent $i^{\bar{c}_1^1}$ is still matched to the object that ranks her first on its priority list. Step 1.1.B counts at most $\bar{c}^1_1 -1$ steps.

\item[Step 1.1.C.]From the matching $\mu_{1,1}^{\prime \prime}$, agent $i^{\bar{c}_1^1}$ first matches with her top choice object $s$ to reach the matching $\mu_{1,1}^{\prime \prime} + (i^{\bar{c}_1^1},s) - (i^{\bar{c}_1^1}, \mu_{1,1}^{\prime \prime}(i^{\bar{c}_1^1}))$ where $s$ is such that $i \mapsto s \text{ and } i,s \in c^1_1$. Notice that $s$ was unassigned at $\mu_{1,1}^{\prime \prime}$ while the object that ranks $i^{\bar{c}_1^1}$ first on its priority list, i.e. $\mu_{1,1}^{\prime \prime}(i^{\bar{c}_1^1})$, is now unassigned. Next, agents $i^1$ to $i^{\bar{c}_1^1-1}$ successively match to their top choice object to reach the matching $\mu_{1,1}^{\prime \prime \prime} = \mu_{1,1}^{\prime \prime} - \{(i,s) \mid i,s \in c_1^1 \text{ and } s \mapsto i \} + \{(i,s) \mid i,s \in c_1^1 \text{ and } i \mapsto s\}$. Step 1.1.C counts at most $\bar{c}^1_1$ steps. We have reached $\mu_{1,1}^{\prime \prime \prime}$ with $m_1^1 \subseteq \mu_{1,1}^{\prime \prime \prime}$ and so agents belonging to $c_1^1$ are assigned to the same object as in $\mu^T$. Step 1.1 counts at most $3 \bar{c}^1_1 - 1$ steps. Hence, it is sufficient that the agents who move in Step 1.1 look forward $3 \bar{c}^1_1 - 1$ steps ahead to have incentives for engaging the move towards the matching $\mu_{1,1}^{\prime \prime \prime}$ where they already get the object assigned by the TTC. Once they reach those matches they do not move afterwards. If $1 \neq L_1$, then go to Step 1.2. Otherwise, go to Step 1.End with $\mu_{1,L_1}^{\prime \prime \prime} = \mu_{1,1}^{\prime \prime \prime}$.

\item[Step 1.$k$.]($k>1$) If $m_1^k \subseteq \mu_{1,k-1}^{\prime \prime \prime}$ and $k \neq L_1$ then go to Step 1.k+1 with $\mu_{1,k}^{\prime \prime \prime} = \mu_{1,k-1}^{\prime \prime \prime}$. If $m_1^k \subseteq \mu_{1,k-1}^{\prime \prime \prime}$ and $k = L_1$ then go to Step 1.End with $\mu_{1,L_1}^{\prime \prime \prime} = \mu_{1,k-1}^{\prime \prime \prime}$. If $m_1^k \nsubseteq \mu_{1,k-1}^{\prime \prime \prime}$ then go to Step 1.$k$.A.

\item[Step 1.$k$.A.]If $\{(i,s) \mid i,s \in c_1^{k} \text{ and } s \mapsto i\} \subseteq \mu_{1,k-1}^{\prime \prime \prime}$ then go to Step 1.$k$.B with $\mu_{1,k}^{\prime} = \mu_{1,k-1}^{\prime \prime \prime}$. If $\{(i,s) \mid i,s \in c_1^k \text{ and } s \mapsto i\} \nsubseteq \mu_{1,k-1}^{\prime \prime \prime}$ then there is some agent $i$ such that $s \neq \mu_{1,k-1}^{\prime \prime \prime}(i) \neq \mu^T(i)$ and $s \mapsto i$ with $i,s \in c_1^k$. This agent $i$ matches with object $s$ that ranks her first on its priority list. We reach the matching $\mu_{1,k,1}^{\prime}  = \mu_{1,k-1}^{\prime \prime \prime} + (i,s) - \{(i,\mu_{1,k-1}^{\prime \prime \prime}(i)) \mid \mu_{1,k-1}^{\prime \prime \prime}(i) \neq i\} - \{(\mu_{1,k-1}^{\prime \prime \prime -1}(s),s) \mid \#(\mu_{1,k-1}^{\prime \prime \prime -1}(s))=1\}$ where $s \mapsto i$ and $i,s \in c_1^k$. If $\{(i,s) \mid i,s \in c_1^k \text{ and } s \mapsto i\} \subseteq \mu_{1,k,1}^{\prime} $ then go to Step 1.$k$.B with $\mu_{1,k}^{\prime} = \mu_{1,k,1}^{\prime} $. If $\{(i,s) \mid i,s \in c_1^k \text{ and } s \mapsto i\} \nsubseteq \mu_{1,k,1}^{\prime} $ then there is some agent $i^{\prime}$ such that $s^{\prime} \neq \mu_{1,k,1}^{\prime}(i^{\prime}) \neq \mu^T(i^{\prime})$ and $s^{\prime} \mapsto i^{\prime}$ with $i^{\prime},s^{\prime} \in c_1^k$. This agent $i^{\prime}$ matches with object $s^{\prime}$ that ranks her first on its priority list. We reach the matching $\mu_{1,k,2}^{\prime}  = \mu_{1,k,1}^{\prime}  + (i^{\prime},s^{\prime}) - \{(i^{\prime},\mu_{1,k,1}^{\prime}(i^{\prime})) \mid \mu_{1,k,1}^{\prime}(i^{\prime})  \neq i^{\prime}\} - \{(\mu_{1,k,1}^{\prime -1}(s^{\prime}),s^{\prime}) \mid \#(\mu_{1,k,1}^{\prime -1}(s^{\prime}))=1\}$ where $s^{\prime} \mapsto i^{\prime}$ and $i^{\prime},s^{\prime} \in c_1^k$. We proceed as above until we reach the matching $\mu_{1,k}^{\prime } = \mu_{1,k-1}^{\prime \prime \prime} + \{(i,s) \mid i,s \in c_1^k \text{ and } s \mapsto i\} - \{(i,\mu_{1,k-1}^{\prime \prime \prime}(i)) \mid i,s \in c_1^k \text{, } s \mapsto i \text{ and } \mu_{1,k-1}^{\prime \prime \prime}(i) \neq s\} - \{(\mu_{1,k-1}^{\prime \prime \prime -1}(s),s) \mid i,s \in c_1^k \text{, } s \mapsto i \text{ and } \mu_{1,k-1}^{\prime \prime \prime -1}(s) \neq i\}$ where each agent involved in $c_1^k$ is matched to the object that ranks her first on its priority list. Step 1.$k$.A counts at most $\bar{c}^k_1$ steps.

\item[Step 1.$k$.B.]Let $\mathcal{I}_1^k=\{(i^l)\}_{l=1}^{\bar{c}_1^k}$ be such that $i^l \in c_1^k$ and $i^l = i_{o_l} \neq i^{l+1} = i_{o_{l+1}}$ with $o_{l} < o_{l+1}$ for $l=1,...,\bar{c}_1^k-1$. That is, $\mathcal{I}_1^k$ is an ordered set of the agents involved in cycle $c_1^k$ where $\bar{c}_1^k=\#\{i \in I \mid i \in c_1^k\}$ is the number of agents involved in cycle $c_1^k$. From the matching $\mu_{1,k}^{\prime }$, agents $i^1$ to $i^{\bar{c}_1^k-1}$ successively leave their objects to reach the matching $\mu_{1,k}^{\prime \prime} = \mu_{1,k}^{\prime} - \{(i,s) \mid i,s \in c_1^k \text{, } s \mapsto i \text{ and } i \neq i^{\bar{c}_1^k}\} $ where only agent $i^{\bar{c}_1^k}$ is still matched to the object that ranks her first on its priority list. Step 1.$k$.B counts at most $\bar{c}^k_1 -1$ steps.

\item[Step 1.$k$.C.]From the matching $\mu_{1,k}^{\prime \prime}$, agent $i^{\bar{c}_1^k}$ first matches with her top choice object $s$ to reach the matching $\mu_{1,k}^{\prime \prime} + (i^{\bar{c}_1^k},s) - (i^{\bar{c}_1^k}, \mu_{1,k}^{\prime \prime}(i^{\bar{c}_1^k}))$ where $s$ is such that $i \mapsto s \text{ and } i,s \in c^k_1$. Notice that $s$ was unassigned at $\mu_{1,k}^{\prime \prime}$ while the object that ranks $i^{\bar{c}_1^k}$ first on its priority list, i.e. $\mu_{1,k}^{\prime \prime}(i^{\bar{c}_1^k})$, is now unassigned. Next, agents $i^1$ to $i^{\bar{c}_1^k-1}$ successively match to their top choice object to reach the matching $\mu_{1,k}^{\prime \prime \prime} = \mu_{1,k}^{\prime \prime} - \{(i,s) \mid i,s \in c_1^k \text{ and } s \mapsto i \} + \{(i,s) \mid i,s \in c_1^k \text{ and } i \mapsto s\}$. Step 1.$k$.C counts at most $\bar{c}^k_1$ steps. We have reached $\mu_{1,k}^{\prime \prime \prime}$ with $m_1^k \subseteq \mu_{1,k}^{\prime \prime \prime}$ and so agents belonging to $c_1^k$ are assigned to the same object as in $\mu^T$. Step 1.$k$ counts at most $3 \bar{c}^k_1 - 1$ steps. Hence, it is sufficient that the agents who move in Step 1.$k$ look forward $3 \bar{c}^k_1 - 1$ steps ahead to have incentives for engaging the move towards the matching $\mu_{1,k}^{\prime \prime \prime}$ where they already get the object assigned by the TTC. Once they reach those matches they do not move afterwards. If $1 \neq L_1$, then go to Step 1.$k+1$. Otherwise, go to Step 1.End with $\mu_{1,L_1}^{\prime \prime \prime} = \mu_{1,k}^{\prime \prime \prime}$.

\item[Step 1.End.] We have reached $\mu_{1,L_1}^{\prime \prime \prime}$ with $\cup_{l=1}^{L_1} m_1^{l} = M_1 \subseteq \mu_{1,L_1}^{\prime \prime \prime}$. If $\mu_{1,L_1}^{\prime \prime \prime} = \mu^T$ then the process ends. Otherwise, go to Step 2.1.

\item[Step 2.1.]If $m_2^1 \subseteq \mu_{1,L_1}^{\prime \prime \prime}$ and $1 \neq L_2$ then go to Step 2.2 with $\mu_{2,1}^{\prime \prime \prime} = \mu_{1,L_1}^{\prime \prime \prime}$. If $m_2^1 \subseteq \mu_{1,L_1}^{\prime \prime \prime}$ and $1 = L_2$ then go to Step 2.End with $\mu_{2,L_2}^{\prime \prime \prime} = \mu_{1,L_1}^{\prime \prime \prime}$. If $m_2^1 \nsubseteq \mu_{1,L_1}^{\prime \prime \prime}$ then go to Step 2.1.A.

\item[Step 2.1.A.]If $\{(i,s) \mid i,s \in c_2^1 \text{ and } s \mapsto i\} \subseteq \mu_{1,L_1}^{\prime \prime \prime}$ then go to Step 2.1.B with $\mu_{2,1}^{\prime} = \mu_{1,L_1}^{\prime \prime \prime}$. If $\{(i,s) \mid i,s \in c_2^1 \text{ and } s \mapsto i\} \nsubseteq \mu_{1,L_1}^{\prime \prime \prime}$ then there is some agent $i$ such that $s \neq \mu_{1,L_1}^{\prime \prime \prime}(i) \neq \mu^T(i)$ and $s \mapsto i$ with $i,s \in c_2^1$. This agent $i$ matches with object $s \in S \setminus S_1$ that ranks her first on its priority list among agents belonging to $I \setminus I_1$. We reach the matching $\mu_{2,1,1}^{\prime}  = \mu_{1,L_1}^{\prime \prime \prime} + (i,s) - \{(i,\mu_{1,L_1}^{\prime \prime \prime}(i)) \mid \mu_{1,L_1}^{\prime \prime \prime}(i) \neq i\} - \{(\mu_{1,L_1}^{\prime \prime \prime -1}(s),s) \mid \#(\mu_{1,L_1}^{\prime \prime \prime -1}(s))=1\}$ where $s \mapsto i$ and $i,s \in c_2^1$. If $\{(i,s) \mid i,s \in c_2^1 \text{ and } s \mapsto i\} \subseteq \mu_{2,1,1}^{\prime} $ then go to Step 2.1.B with $\mu_{2,1}^{\prime} = \mu_{2,1,1}^{\prime} $. If $\{(i,s) \mid i,s \in c_2^1 \text{ and } s \mapsto i\} \nsubseteq \mu_{2,1,1}^{\prime} $ then there is some agent $i^{\prime}$ such that $s^{\prime} \neq \mu_{2,1,1}^{\prime}(i^{\prime}) \neq \mu^T(i^{\prime})$ and $s^{\prime} \mapsto i^{\prime}$ with $i^{\prime},s^{\prime} \in c_2^1$. This agent $i^{\prime}$ matches with object $s^{\prime} \in S \setminus S_1$ that ranks her first on its priority list among agents belonging to $I \setminus I_1$. We reach the matching $\mu_{2,1,2}^{\prime}  = \mu_{2,1,1}^{\prime}  + (i^{\prime},s^{\prime}) - \{(i^{\prime},\mu_{2,1,1}^{\prime}(i^{\prime})) \mid \mu_{2,1,1}^{\prime}(i^{\prime})  \neq i^{\prime}\} - \{(\mu_{2,1,1}^{\prime -1}(s^{\prime}),s^{\prime}) \mid \#(\mu_{2,1,1}^{\prime -1}(s^{\prime}))=1\}$ where $s^{\prime} \mapsto i^{\prime}$ and $i^{\prime},s^{\prime} \in c_2^1$. We proceed as above until we reach the matching $\mu_{2,1}^{\prime } = \mu_{1,L_1}^{\prime \prime \prime} + \{(i,s) \mid i,s \in c_2^1 \text{ and } s \mapsto i\} - \{(i,\mu_{1,L_1}^{\prime \prime \prime}(i)) \mid i,s \in c_2^1 \text{, } s \mapsto i \text{ and } \mu_{1,L_1}^{\prime \prime \prime}(i) \neq s\} - \{(\mu_{1,L_1}^{\prime \prime \prime -1}(s),s) \mid i,s \in c_2^1 \text{, } s \mapsto i \text{ and } \mu_{1,L_1}^{\prime \prime \prime -1}(s) \neq i\}$ where each agent involved in $c_2^1$ is matched to the object that ranks her first on its priority list among agents belonging to $I \setminus I_1$. Step 2.1.A counts at most $\bar{c}^2_1$ steps.

\item[Step 2.1.B.]Let $\mathcal{I}_2^1=\{(i^l)\}_{l=1}^{\bar{c}_2^1}$ be such that $i^l \in c_2^1$ and $i^l = i_{o_l} \neq i^{l+1} = i_{o_{l+1}}$ with $o_{l} < o_{l+1}$ for $l=1,...,\bar{c}_2^1-1$. That is, $\mathcal{I}_2^1$ is an ordered set of the agents involved in cycle $c_2^1$ where $\bar{c}_2^1=\#\{i \in I \mid i \in c_2^1\}$ is the number of agents involved in cycle $c_2^1$. From the matching $\mu_{2,1}^{\prime }$, agents $i^1$ to $i^{\bar{c}_2^1-1}$ successively leave their objects to reach the matching $\mu_{2,1}^{\prime \prime} = \mu_{2,1}^{\prime} - \{(i,s) \mid i,s \in c_2^1 \text{, } s \mapsto i \text{ and } i \neq i^{\bar{c}_2^1}\} $ where only agent $i^{\bar{c}_2^1}$ is still matched to the object that ranks her first on its priority list among agents belonging to $I \setminus I_1$. Step 2.1.B counts at most $\bar{c}^2_1 -1$ steps.

\item[Step 2.1.C.]From the matching $\mu_{2,1}^{\prime \prime}$, agent $i^{\bar{c}_2^1}$ first matches with her top choice object $s \in S \setminus S_1$ to reach the matching $\mu_{2,1}^{\prime \prime} + (i^{\bar{c}_2^1},s) - (i^{\bar{c}_2^1}, \mu_{2,1}^{\prime \prime}(i^{\bar{c}_2^1}))$ where $s$ is such that $i \mapsto s \text{ and } i,s \in c^2_1$. Notice that $s$ was unassigned at $\mu_{2,1}^{\prime \prime}$ while the object that ranks $i^{\bar{c}_2^1}$ first on its priority list among agents belonging to $I \setminus I_1$, i.e. $\mu_{2,1}^{\prime \prime}(i^{\bar{c}_2^1})$, is now unassigned. Next, agents $i^1$ to $i^{\bar{c}_2^1-1}$ successively match to their top choice object in $S \setminus S_1$ to reach the matching $\mu_{2,1}^{\prime \prime \prime} = \mu_{2,1}^{\prime \prime} - \{(i,s) \mid i,s \in c_2^1 \text{ and } s \mapsto i \} + \{(i,s) \mid i,s \in c_2^1 \text{ and } i \mapsto s\}$. Step 2.1.C counts at most $\bar{c}^1_2$ steps. We have reached $\mu_{2,1}^{\prime \prime \prime}$ with $m_2^1 \subseteq \mu_{2,1}^{\prime \prime \prime}$ and so agents belonging to $c_2^1$ are assigned to the same object as in $\mu^T$. Step 2.1 counts at most $3 \bar{c}^1_2 - 1$ steps. Hence, it is sufficient that the agents who move in Step 2.1 look forward $3 \bar{c}^1_2 - 1$ steps ahead to have incentives for engaging the move towards the matching $\mu_{2,1}^{\prime \prime \prime}$ where they already get the object assigned by the TTC. Once they reach those matches they do not move afterwards. If $1 \neq L_2$, then go to Step 2.2. Otherwise, go to Step 2.End with $\mu_{2,L_2}^{\prime \prime \prime} = \mu_{2,1}^{\prime \prime \prime}$.

\item[Step 2.$k$.]($k>1$) If $m_2^k \subseteq \mu_{2,k-1}^{\prime \prime \prime}$ and $k \neq L_2$ then go to Step 2.k+1 with $\mu_{2,k}^{\prime \prime \prime} = \mu_{2,k-1}^{\prime \prime \prime}$. If $m_2^k \subseteq \mu_{2,k-1}^{\prime \prime \prime}$ and $k = L_2$ then go to Step 2.End with $\mu_{2,L_2}^{\prime \prime \prime} = \mu_{2,k-1}^{\prime \prime \prime}$. If $m_2^k \nsubseteq \mu_{2,k-1}^{\prime \prime \prime}$ then go to Step 2.$k$.A.

\item[Step 2.$k$.A]If $\{(i,s) \mid i,s \in c_2^{k} \text{ and } s \mapsto i\} \subseteq \mu_{2,k-1}^{\prime \prime \prime}$ then go to Step 2.$k$.B with $\mu_{2,k}^{\prime} = \mu_{2,k-1}^{\prime \prime \prime}$. If $\{(i,s) \mid i,s \in c_2^k \text{ and } s \mapsto i\} \nsubseteq \mu_{2,k-1}^{\prime \prime \prime}$ then there is some agent $i$ such that $s \neq \mu_{2,k-1}^{\prime \prime \prime}(i) \neq \mu^T(i)$ and $s \mapsto i$ with $i,s \in c_2^k$. This agent $i$ matches with object $s$ that ranks her first on its priority list among agents belonging to $I \setminus I_1$. We reach the matching $\mu_{2,k,1}^{\prime}  = \mu_{2,k-1}^{\prime \prime \prime} + (i,s) - \{(i,\mu_{2,k-1}^{\prime \prime \prime}(i)) \mid \mu_{2,k-1}^{\prime \prime \prime}(i) \neq i\} - \{(\mu_{2,k-1}^{\prime \prime \prime -1}(s),s) \mid \#(\mu_{2,k-1}^{\prime \prime \prime -1}(s))=1\}$ where $s \mapsto i$ and $i,s \in c_2^k$. If $\{(i,s) \mid i,s \in c_2^k \text{ and } s \mapsto i\} \subseteq \mu_{2,k,1}^{\prime} $ then go to Step 2.$k$.B with $\mu_{2,k}^{\prime} = \mu_{2,k,1}^{\prime} $. If $\{(i,s) \mid i,s \in c_2^k \text{ and } s \mapsto i\} \nsubseteq \mu_{2,k,1}^{\prime} $ then there is some agent $i^{\prime}$ such that $s^{\prime} \neq \mu_{2,k,1}^{\prime}(i^{\prime}) \neq \mu^T(i^{\prime})$ and $s^{\prime} \mapsto i^{\prime}$ with $i^{\prime},s^{\prime} \in c_2^k$. This agent $i^{\prime}$ matches with object $s^{\prime}$ that ranks her first on its priority list among agents belonging to $I \setminus I_1$. We reach the matching $\mu_{2,k,2}^{\prime}  = \mu_{2,k,1}^{\prime}  + (i^{\prime},s^{\prime}) - \{(i^{\prime},\mu_{2,k,1}^{\prime}(i^{\prime})) \mid \mu_{2,k,1}^{\prime}(i^{\prime})  \neq i^{\prime}\} - \{(\mu_{2,k,1}^{\prime -1}(s^{\prime}),s^{\prime}) \mid \#(\mu_{2,k,1}^{\prime -1}(s^{\prime}))=1\}$ where $s^{\prime} \mapsto i^{\prime}$ and $i^{\prime},s^{\prime} \in c_2^k$. We proceed as above until we reach the matching $\mu_{2,k}^{\prime } = \mu_{2,k-1}^{\prime \prime \prime} + \{(i,s) \mid i,s \in c_2^k \text{ and } s \mapsto i\} - \{(i,\mu_{2,k-1}^{\prime \prime \prime}(i)) \mid i,s \in c_2^k \text{, } s \mapsto i \text{ and } \mu_{2,k-1}^{\prime \prime \prime}(i) \neq s\} - \{(\mu_{2,k-1}^{\prime \prime \prime -1}(s),s) \mid i,s \in c_2^k \text{, } s \mapsto i \text{ and } \mu_{2,k-1}^{\prime \prime \prime -1}(s) \neq i\}$ where each agent involved in $c_2^k$ is matched to the object that ranks her first on its priority list among agents belonging to $I \setminus I_1$. Step 2.$k$.A counts at most $\bar{c}^k_2$ steps.

\item[Step 2.$k$.B.]Let $\mathcal{I}_2^k=\{(i^l)\}_{l=1}^{\bar{c}_2^k}$ be such that $i^l \in c_2^k$ and $i^l = i_{o_l} \neq i^{l+1} = i_{o_{l+1}}$ with $o_{l} < o_{l+1}$ for $l=1,...,\bar{c}_2^k-1$. That is, $\mathcal{I}_2^k$ is an ordered set of the agents involved in cycle $c_2^k$ where $\bar{c}_2^k=\#\{i \in I \mid i \in c_2^k\}$ is the number of agents involved in cycle $c_2^k$. From the matching $\mu_{2,k}^{\prime }$, agents $i^1$ to $i^{\bar{c}_2^k-1}$ successively leave their objects to reach the matching $\mu_{2,k}^{\prime \prime} = \mu_{2,k}^{\prime} - \{(i,s) \mid i,s \in c_2^k \text{, } s \mapsto i \text{ and } i \neq i^{\bar{c}_2^k}\} $ where only agent $i^{\bar{c}_2^k}$ is still matched to the object that ranks her first on its priority list among agents belonging to $I \setminus I_1$. Step 2.$k$.B counts at most $\bar{c}^k_2 -1$ steps.

\item[Step 2.$k$.C.]From the matching $\mu_{2,k}^{\prime \prime}$, agent $i^{\bar{c}_2^k}$ first matches with her top choice object $s \in S \setminus S_1$ to reach the matching $\mu_{2,k}^{\prime \prime} + (i^{\bar{c}_2^k},s) - (i^{\bar{c}_2^k}, \mu_{2,k}^{\prime \prime}(i^{\bar{c}_2^k}))$ where $s$ is such that $i \mapsto s \text{ and } i,s \in c^k_2$. Notice that $s$ was unassigned at $\mu_{2,k}^{\prime \prime}$ while the object that ranks $i^{\bar{c}_2^k}$ first on its priority list among agents belonging to $I \setminus I_1$, i.e. $\mu_{2,k}^{\prime \prime}(i^{\bar{c}_2^k})$, is now unassigned. Next, agents $i^1$ to $i^{\bar{c}_2^k-1}$ successively match to their top choice object in $S \setminus S_1$ to reach the matching $\mu_{2,k}^{\prime \prime \prime} = \mu_{2,k}^{\prime \prime} - \{(i,s) \mid i,s \in c_2^k \text{ and } s \mapsto i \} + \{(i,s) \mid i,s \in c_2^k \text{ and } i \mapsto s\}$. Step 2.$k$.C counts at most $\bar{c}^k_2$ steps. We have reached $\mu_{2,k}^{\prime \prime \prime}$ with $m_2^k \subseteq \mu_{2,k}^{\prime \prime \prime}$ and so agents belonging to $c_2^k$ are assigned to the same object as in $\mu^T$. Step 2.$k$ counts at most $3 \bar{c}^k_2 - 1$ steps. Hence, it is sufficient that the agents who move in Step 2.$k$ look forward $3 \bar{c}^k_2 - 1$ steps ahead to have incentives for engaging the move towards the matching $\mu_{2,k}^{\prime \prime \prime}$ where they already get the object assigned by the TTC. Once they reach those matches they do not move afterwards. If $1 \neq L_2$, then go to Step 2.$k+1$. Otherwise, go to Step 2.End with $\mu_{2,L_2}^{\prime \prime \prime} = \mu_{2,k}^{\prime \prime \prime}$.

\item[Step 2.End.] We have reached $\mu_{2,L_2}^{\prime \prime \prime}$ with $M_1 \cup M_2 \subseteq \mu_{2,L_2}^{\prime \prime \prime}$. If $\mu_{2,L_2}^{\prime \prime \prime} = \mu^T$ then the process ends. Otherwise, go to Step 3.1.

\item[End.]The process goes on until we reach $\mu_{\bar{k},L_{\bar{k}}}^{\prime \prime \prime} = \cup_{k=1}^{\bar{k}} M_k = \mu^T$.

Given $k\geq 3 \gamma -1$, we have that, along the constructed horizon-$k$ improving path, each time an agent $i$ is on the move she has incentives to do so since her end match (i.e. her TTC match $\mu^T(i)$) is within her horizon.

\end{itemize}
\end{proof}

The matching obtained from the TTC algorithm is always Pareto efficient but may not be stable when agents are myopic. Theorem \ref{T5} shows that, once agents are sufficiently farsighted (i.e. $k\geq 3 \gamma -1$), the matching obtained from the TTC algorithm becomes stable. Example \ref{Ex1} highlights Theorem \ref{T5}. In addition, it shows that, once agents are no more myopic, the matching obtained from the Deferred Acceptance (DA) algorithm may become unstable.
 
\begin{aexample}\label{Ex1} 
Consider a priority-based matching problem $\langle I,S,P,F\rangle$ with $I=\{i_1,i_2,i_3\}$ and $S=\{s_1,s_2,s_3\}$. Agents' preferences and objects' priorities are as follows.
\begin{center}
\begin{tabular}{ccc}
\multicolumn{3}{c}{Agents}\\
\hline
$P_{i_1}$& $P_{i_2}$ & $P_{i_3}$  \\
\hline
$s_1$ & $s_1$ & $s_2$   \\
$s_3$ & $s_2$ & $s_1$   \\
$s_2$ & $s_3$ & $s_3$   
\end{tabular}
~\qquad~
\begin{tabular}{ccc}
\multicolumn{3}{c}{Objects}\\
\hline
$F_{s_1}$& $F_{s_2}$ & $F_{s_3}$\\
\hline
$i_3$ & $i_2$ & $i_2$\\
$i_1$ & $i_1$ & $i_3$\\
$i_2$ & $i_3$ & $i_1$
    \end{tabular}
  \end{center}

\end{aexample}

Using Example \ref{Ex1} we provide the basic intuition behind Theorem \ref{T5} and its proof. In Example \ref{Ex1}, $\mu^T=\{(i_1,s_3),(i_2,s_1),(i_3,s_2)\}$ is the matching obtained from the TTC algorithm. In the first round of the TTC algorithm, there is one cycle where agent $i_2$ points to object $s_1$, object $s_1$ points to agent $i_3$, agent $i_3$ points to object $s_2$ and object $s_2$ points to agent $i_2$. That is, $C_1=\{c_1^1\}$ with $c_1^1=(s_1,i_3,s_2,i_2)$. Agent $i_2$ is assigned to object $s_1$ and agent $i_3$ is assigned to object $s_2$: $m_1^1=\{(i_2,s_1),(i_3,s_2)\}$, and so $i_2$ and $i_3$ exchange their priority. In the second round of the TTC algorithm, there is only one leftover agent, $i_1$, who points to object $s_3$ and one leftover object, $s_3$, that points to agent $i_1$. That is, $C_2=\{c_2^1\}$ with $c_2^1=(s_3,i_1)$. Agent $i_1$ is assigned to object $s_3$: $m_2^1=\{(i_1,s_3)\}$, and so $\mu^T=m_1^1 \cup m_2^1$. 

From Theorem \ref{T5} we know that $\{\mu^T\}$ is a horizon-$k$ vNM stable set for $k\geq 3 \gamma -1$. Indeed, if $k\geq 3 \gamma -1$, then from any $\mu \neq \mu^T$ there exists a horizon-$k$ improving path leading to $\mu^T$. In Example \ref{Ex1}, $\gamma =c_1^{\max}=2$. Take for instance the matching $\mu_0 = \{(i_1,s_1),(i_2,s_2),(i_3,s_3)\}$. We now construct a horizon-$k$ improving from $\mu_0$  to $\mu^T = \{(i_1,s_3),(i_2,s_1),(i_3,s_2)\} = \mu_5$ following the steps as in the proof of Theorem \ref{T5}. First, we consider agents and objects belonging to the cycles in $C_1$. Notice that $m_1^1 =\{(i_2,s_1),(i_3,s_2)\} \cap  \mu_0 = \emptyset$. Looking forward towards $\mu_4$ and $\mu^T$ (where $\mu_4(i_3)=\mu^T(i_3)$), agent $i_3$ matches to the object $s_1$ that ranks her first on its priority list to reach the matching $\mu_1 = \{(i_1,i_1),(i_2,s_2),(i_3,s_1)\}$ where agent $i_3$ is matched to the object in $c_1^1$ where she has priority. By doing so, agent $i_1$ is left without object. In $\mu_0$ (and $\mu_1$), agent $i_2$ is already matched to the object in $c_1^1$ where she has priority.\footnote{Hence, it is sufficient that agent $i_3$ looks forward at least $4$ (instead of $5$) steps ahead for having incentives to engage her first move towards $\mu_4$.}  Next, agent $i_3$ leaves her object $s_1$ to reach the matching $\mu_2 = \{(i_1,i_1),(i_2,s_2),(i_3,i_3)\}$ where agent $i_3$ is not assigned to any object. Agent $i_3$ is temporarily worse off, but she anticipates to end up being matched with $\mu^T(i_3)$. Next, agent $i_2$ matches to $s_1$ that was left unassigned to reach the matching $\mu_3 = \{(i_1,i_1),(i_2,s_1),(i_3,i_3)\}$. Next, agent $i_3$ matches to $s_2$ that was left by $i_2$ to reach the matching $\mu_4 = \{(i_1,i_1),(i_2,s_1),(i_3,s_2)\}$ with $m_1^1 = \{(i_2,s_1),(i_3,s_2)\} \subseteq \mu_4$. Finally, we consider agents and objects belonging to the cycles in $C_2$. Since $m_2^1 =\{(i_1,s_3)\} \cap  \mu_4 = \emptyset$, agent $i_1$ is assigned to object $s_3$ to form the match $(i_1,s_3)$ and to reach the matching $\mu_5 = \mu^T$. Thus, $\mu^T \in \phi_k(\mu_0)$.

In Example \ref{Ex1}, $\mu^D=\{(i_1,s_3),(i_2,s_2),(i_3,s_1)\}$ is the matching obtained from the Deferred Acceptance (DA) algorithm, $\mu^{B}=\{(i_1,s_1),(i_2,s_3),(i_3,s_2)\}$ is the matching obtained from the Immediate Acceptance (IA) algorithm (i.e. the Boston mechanism). Thus, $\mu^T \neq \mu^D \neq \mu^B$.

Let
\begin{equation*}
\begin{split}
\mu^1 & = \{(i_1,s_1),(i_2,i_2),(i_3,s_2)\}, \\
\mu^2 & = \{(i_1,s_1),(i_2,s_3),(i_3,s_2)\} = \mu^B, \\
\mu^3 & = \{(i_1,s_1),(i_2,s_2),(i_3,i_3)\}, \\
\mu^4 & = \{(i_1,s_1),(i_2,s_2),(i_3,s_3)\}, \\
\mu^5 & = \{(i_1,i_1),(i_2,s_2),(i_3,s_1)\}.
\end{split}
\end{equation*}

Since agent $i_1$ is as well off and agents $i_2$ and $i_3$ are strictly better off in $\mu^T$ than in $\mu^D$, we have that there is no horizon-$k$ improving path from $\mu^T$ to $\mu^D$ for $k \geq 5$. That is, $\mu^D \notin \phi_k(\mu^T)$ for $k \geq 5$. Hence, $\{\mu^D\}$ is not a horizon-$k$ vNM stable set for $k \geq 5$ since (ES) is violated. 

Computing the horizon-$k$ improving paths emanating from $\mu^T$ for $k \geq 5$, we get $\phi_k(\mu^T)=\{\mu^1,\mu^2,\mu^3,\mu^4\}$. Notice that $\mu^5 \notin \phi_k(\mu^T)$ since agent $i_1$ is worse off in $\mu^5$ than in $\mu^T$. From $\mu^1$, $\mu^2$, $\mu^3$, $\mu^4$ and $\mu^5$, there is a horizon-$k$ improving path to $\mu^D$. That is, $ \mu^D \in \phi_k(\mu)$ for $\mu \in \{\mu^1,\mu^2,\mu^3,\mu^4,\mu^5\}$. From $\mu^D$ there is only a horizon-$k$ improving path to $\mu^T$; i.e. $\phi_k(\mu^D)=\{\mu^T\}$ for $k \geq 5$.

Thus, we have that, for $k \geq 5$, (i) $\phi_k(\mu^D)=\{\mu^T\}$, (ii) $\mu^D \notin \phi_k(\mu^T)$, and (iii) $\mu^D \in \phi_k(\mu)$ for each $\mu$ such that $\mu \in \phi_k(\mu^T)$. It follows then that for $k \geq 5$, $V = \{\mu^T\}= \{\{(i_1,s_3),(i_2,s_1),(i_3,s_2)\}\}$ is the unique horizon-$k$ vNM stable set. First, any $V^{\prime} \neq \{\mu^T\}$ such that $\mu^T \in V$ violates IS. Hence, $\mu^T \notin V^{\prime}$. Second, any $V^{\prime} \nsupseteq \{\mu^T,\mu^D\}$ violates ES. Hence, $\mu^D \in V^{\prime}$. Third, any $V^{\prime} \neq \{\mu^D\}$ such that $\mu^D \in V^{\prime}, \mu^T \notin V^{\prime}$ violates either IS or ES.

Thus, the DA matching $\mu^D$ and the IA matching $\mu^B$ do not belong to any horizon-$k$ vNM stable set for $k \geq 5$. Since the matching obtained from the IA algorithm is Pareto efficient, Example \ref{Ex1} also shows that there are priority-based matching problems where some Pareto efficient matching does not belong to any horizon-$k$ vNM stable set for $k\geq 3 \gamma -1$.

\begin{aremark}
There are priority-based matching problems such that, for $k\geq 3 \gamma -1$,
\begin{itemize}
\item[(i)] the matching obtained from the Deferred Acceptance (DA) algorithm does not belong to any horizon-$k$ vNM stable set;
\item[(ii)] some Pareto efficient matching does not belong to any horizon-$k$ vNM stable set.
\end{itemize} 
\end{aremark}

What happens if $k$ becomes small? Computing the horizon-$k$ improving paths emanating from $\mu^T$ for $k \leq 4$ in Example \ref{Ex1}, we get $\phi_k(\mu^T)=\{\mu^1,\mu^2,\mu^3,\mu^4,\mu^5,\mu^D\}$. So, there is now a horizon-$k$ improving path from $\mu^T$ to $\mu^D$. In addition, from $\mu^1$, $\mu^2$, $\mu^3$, $\mu^4$ and $\mu^5$, there is still a horizon-$k$ improving path to $\mu^D$. That is, $ \mu^D \in \phi_k(\mu)$ for $\mu \in \{\mu^1,\mu^2,\mu^3,\mu^4,\mu^5,\mu^T\}$ for $k \leq 4$. From $\mu^D$ there is no horizon-$k$ improving path for $k \leq 2$, but there is one for $3 \leq k \leq 4$; i.e. $\phi_k(\mu^D)=\emptyset$ for $k \leq 2$ and $\phi_k(\mu^D)=\{\mu^T\}$ for $3 \leq k \leq 4$. It follows then that for $3 \leq k \leq 4$, both $V=\{\mu^D\}$ and $V^{\prime}=\{\mu^T\}$ are horizon-$k$ vNM stable sets. However, for $k \leq 2$, $V = \{\mu^D\}$ is the unique horizon-$k$ vNM stable set. In general, it holds that for any priority-based matching problem, the DA matching $\mu^D$ belongs to all horizon-$1$ vNM stable sets.\footnote{Proposition 3 in Luo, Mauleon and Vannetelbosch (2021) provides a characterization of a horizon-$1$ vNM stable set of networks. Since matchings are a subclass of networks and the DA matching is stable, it follows that the DA matching belongs to all horizon-$1$ vNM stable sets.}

\begin{aremark}
Let $\langle I,S,P,F\rangle$ be a priority-based matching problem and $\mu^D$ be the matching obtained from the Deferred Acceptance (DA) mechanism. The matching $\mu^D$ belongs to all horizon-$1$ vNM stable sets.
\end{aremark}

\section{Discussion}\label{S6}

\subsection{Tighter bound on $k$}

We now look whether one could find a tighter bound on $k$ such that for all $k^{\prime} \geq k$, the singleton set $\{\mu^T\}$ is a horizon-$k^{\prime}$ vNM stable set. Consider the proof of Theorem \ref{T5}. At the end of Step 1.1.A we reach the matching $\mu_{1,1}^{\prime } = \mu + \{(i,s) \mid i,s \in c_1^1 \text{ and } s \mapsto i\} - \{(i,\mu(i)) \mid i,s \in c_1^1 \text{, } s \mapsto i \text{ and } \mu(i) \neq s\} - \{(\mu^{-1}(s),s) \mid i,s \in c_1^1 \text{, } s \mapsto i \text{ and } \mu^{-1}(s) \neq i\}$ where each agent involved in $c_1^1$ is matched to the object that ranks her first on its priority list. Remember that Step 1.1.A counts at most $\bar{c}_1^1$ steps where $\bar{c}_1^1=\#\{i \in I \mid i \in c_1^1\}$ is the number of agents involved in cycle $c_1^1$. We slightly modify Step 1.1.B as follows.

\begin{itemize}
\item[Step 1.1.B]Let $\bar{\mathcal{I}}_1^1=\{(i^l)\}_{l=1}^{\bar{c}_1^1}$ be an ordered set of the agents involved in cycle $c_1^1$ such that $i^l \in c_1^1$ and $i^l$ matches before $i^{l+1}$ to the object that ranks her first on its priority list in Step 1.1.A, for $l=1,...,\bar{c}_1^1-1$. Agents who were already matched (at the beginning of Step 1.1.A ) to the objects that rank them first on their priority lists occupy the first positions of $\bar{\mathcal{I}}_1^1$. From the matching $\mu_{1,1}^{\prime }$, agents $i^1$ to $i^{\bar{c}_1^1-1}$ successively leave their objects to reach the matching $\mu_{1,1}^{\prime \prime} = \mu_{1,1}^{\prime} - \{(i,s) \mid i,s \in c_1^1 \text{, } s \mapsto i \text{ and } i \neq i^{\bar{c}_1^1}\} $ where only agent $i^{\bar{c}_1^1}$ is still matched to the object that ranks her first on its priority list. Step 1.1.B counts at most $\bar{c}^1_1 -1$ steps.

\item[Step 1.1.C.]From the matching $\mu_{1,1}^{\prime \prime}$, agent $i^{\bar{c}_1^1}$ first matches with her top choice object $s$ to reach the matching $\mu_{1,1}^{\prime \prime} + (i^{\bar{c}_1^1},s) - (i^{\bar{c}_1^1}, \mu_{1,1}^{\prime \prime}(i^{\bar{c}_1^1}))$ where $s$ is such that $i \mapsto s \text{ and } i,s \in c^1_1$. Notice that $s$ was unassigned at $\mu_{1,1}^{\prime \prime}$ while the object that ranks $i^{\bar{c}_1^1}$ first on its priority list, i.e. $\mu_{1,1}^{\prime \prime}(i^{\bar{c}_1^1})$, is now unassigned. Next, agents $i^1$ to $i^{\bar{c}_1^1-1}$ successively match to their top choice object to reach the matching $\mu_{1,1}^{\prime \prime \prime} = \mu_{1,1}^{\prime \prime} - \{(i,s) \mid i,s \in c_1^1 \text{ and } s \mapsto i \} + \{(i,s) \mid i,s \in c_1^1 \text{ and } i \mapsto s\}$. Step 1.1.C counts at most $\bar{c}^1_1$ steps. We have reached $\mu_{1,1}^{\prime \prime \prime}$ with $m_1^1 \subseteq \mu_{1,1}^{\prime \prime \prime}$ and so agents belonging to $c_1^1$ are assigned to the same object as in $\mu^T$. If $1 \neq L_1$, then go to Step 1.2. Otherwise, go to Step 1.End with $\mu_{1,L_1}^{\prime \prime \prime} = \mu_{1,1}^{\prime \prime \prime}$.

\end{itemize}

Step 1.1 counts at most $3 \bar{c}^1_1 - 1$ steps. Since we use now the order in which the agents are matched to the objects that rank them first on their priority lists in Step 1.1.A, there is at most $2 \bar{c}^1_1 + 1$ steps between the first move of an agent in Step 1.1.A and her final move in Step 1.1.C. Hence, it becomes sufficient that the agents who move in Step 1.1 look forward $2 \bar{c}^1_1 + 1$ steps ahead to have incentives for engaging the move towards the matching $\mu_{1,1}^{\prime \prime \prime}$ where they already get the object assigned by the TTC. Once they reach those matches they do not move afterwards.

Thus, the lower bound $\underline{k} = 2 \gamma + 1$ is a tighter bound on $k$ such that for all $k^{\prime} \geq \underline{k}$, the singleton set $\{\mu^T\}$ is a horizon-$k^{\prime}$ vNM stable set. However, it relies on improving paths that require much more coordination on behalf of the agents than the ones associated to the lower bound of Theorem \ref{T5}, i.e. $3 \gamma - 1$. 

\subsection{Farsighted stable set}

By simply replacing $\mu_{\min \{l+k,L\}}(i) P_i \mu _{l}(i)$ by $\mu_{L}(i) P_i \mu _{l}(i)$ in the definition of a horizon-$k$ improving path, we obtain the definition of a farsighted improving path. Let $\phi_{\infty}(\mu)$ be the set of matchings that can be reached by means of a farsighted improving path emanating from $\mu$. Given the number of possible matchings is finite, there exists $k^{\star}$ such that for all $k\geq k^{\star}$, $\phi_k(\mu)=\phi_{k+1}(\mu)$, and so $\phi_{k^{\star}}(\mu)=\phi_{\infty}(\mu)$.\footnote{Mauleon, Vannetelbosch and Vergote (2011) define and characterize the vNM farsighted stable set for two-sided matching problems.} 

\begin{adefinition} \label{D7} Let $\langle I,S,P,F\rangle$ be a priority-based matching problem. A set of matchings $V \subseteq \mathcal{M}$ is a vNM farsighted stable set if it satisfies:
\begin{itemize}
\item[(\textbf{i})] Internal stability (IS): For every $\mu, \mu^{\prime} \in V$,
it holds that $\mu^{\prime} \notin \phi_{\infty}(\mu).$
\item[(\textbf{ii})] External stability (ES): For every $\mu \in \mathcal{M} \setminus V$, it holds that $\phi_{\infty}(\mu) \cap V \neq \emptyset. $
\end{itemize}
\end{adefinition}

\begin{acorollary}
Let $\langle I,S,P,F\rangle$ be a priority-based matching problem and $\mu^T$ be the matching obtained from the Top Trading Cycles mechanism. The singleton set $\{\mu^T\}$ is a vNM farsighted stable set.\footnote{This result is robust to the incorporation of various forms of maximality in the definition of farsighted improving path, like the strong rational expectations farsighted stable set in Dutta and Vohra (2017) and absolute maximality as in Ray and Vohra (2019). See also Herings, Mauleon and Vannetelbosch (2020).} 
\end{acorollary}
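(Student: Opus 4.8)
The plan is to derive the Corollary directly from Theorem~\ref{T5} together with the stabilization property of $\phi_k$. First I would record the observation, already stated in the excerpt, that since $\mathcal{M}$ is finite there exists a threshold $k^{\star}$ such that $\phi_{k}(\mu)=\phi_{\infty}(\mu)$ for every $\mu \in \mathcal{M}$ and every $k \geq k^{\star}$. In particular, choosing $k$ large enough that simultaneously $k \geq 3\gamma - 1$ and $k \geq k^{\star}$, we obtain a single value of $k$ at which the horizon-$k$ improving relation coincides with the farsighted improving relation $\phi_{\infty}$.

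Next I would verify the two conditions of Definition~\ref{D7} for $V=\{\mu^T\}$. Internal stability is immediate: $V$ is a singleton, so there are no two distinct matchings $\mu,\mu^{\prime}\in V$ to check, and the condition $\mu^{\prime}\notin\phi_{\infty}(\mu)$ holds vacuously. For external stability, take any $\mu \in \mathcal{M}\setminus\{\mu^T\}$. By Theorem~\ref{T5}, for the chosen $k$ we have $\mu^T \in \phi_k(\mu)$, and since $k \geq k^{\star}$ gives $\phi_k(\mu)=\phi_{\infty}(\mu)$, it follows that $\mu^T \in \phi_{\infty}(\mu)$, hence $\phi_{\infty}(\mu)\cap V \neq \emptyset$. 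This establishes (ES) and completes the argument.

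The only delicate point is making the reduction from $\phi_k$ to $\phi_{\infty}$ fully rigorous. A horizon-$k$ improving path and a farsighted improving path differ only in the myopic-versus-far comparison an agent makes at each move, namely $\mu_{\min\{l+k,L\}}(i)\,P_i\,\mu_l(i)$ versus $\mu_L(i)\,P_i\,\mu_l(i)$. The constructed path in the proof of Theorem~\ref{T5} has the feature that once an agent attains her TTC match she never moves again, so for every moving agent her terminal match $\mu_L(i)=\mu^T(i)$ already lies within her $k$-step horizon whenever $k\geq 3\gamma-1$; this is exactly why the same sequence of matchings is simultaneously a valid horizon-$k$ improving path and a valid farsighted improving path. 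I expect this verification to be the main obstacle, since one must confirm that the incentive inequalities used along the Theorem~\ref{T5} path remain valid when the far-sighted comparison replaces the horizon-$k$ comparison. But because each agent's relevant comparison is always against her final (TTC) assignment, the inequality $\mu^T(i)\,P_i\,\mu_l(i)$ needed for the farsighted path is precisely the one already guaranteed by the construction, so the same explicit path witnesses $\mu^T\in\phi_{\infty}(\mu)$ without any additional work.
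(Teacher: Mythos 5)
Your proposal is correct and follows essentially the same route as the paper: the corollary is derived from Theorem~\ref{T5} together with the finiteness observation that $\phi_k(\mu)=\phi_\infty(\mu)$ for all $k\geq k^\star$, and your closing remark---that the path constructed in Theorem~\ref{T5}'s proof already makes every comparison against the terminal matching $\mu^T$, so it is verbatim a farsighted improving path---is exactly the reason the paper states this as an immediate corollary. Nothing is missing.
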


\subsection{Horizon-$L$ farsighted set}

An alternative concept for limited farsightedness is obtained by adapting Herings, Mauleon and Vannetelbosch (2019) definition of a horizon-$L$ farsighted set of networks to priority-based matching problems. A horizon-$L$ farsighted set of matchings has to satisfy three conditions: (i) external deviations should be horizon-$L$ deterred, (ii) from any matching outside the set there is a sequence of farsighted improving paths of length smaller than or equal to $L$ leading to some matching in the set, (iii) there is no proper subset satisfying the first two conditions.

\begin{adefinition}\label{DFIPL} 
Let $\langle I,S,P,F\rangle$ be a priority-based matching problem. A farsighted improving path of length $L$ from a matching $\mu \in \mathcal{M}$ to a matching $\mu^{\prime } \in \mathcal{M} \setminus \{\mu \}$ is a finite sequence of distinct matchings $\mu_{0},\ldots ,\mu_{L}$ with $\mu_{0} = \mu$ and $\mu_{L} = \mu^{\prime }$ such that for every $l \in \{0,\ldots ,L-1\}$ either
\begin{itemize}
\item[(\textbf{i})]$ \mu_{l + 1} = \mu_{l} - (i,s)$ for some $(i,s) \in I \times S$ such that $\mu_{L}(i) P_i \mu _{l}(i)$, or
\item[(\textbf{ii})]$ \mu_{l + 1} = \mu_{l} + (i,s) - \{(i,\mu_{l}(i)) \mid \mu_{l}(i) \neq i\} - \{(\mu_{l}^{-1}(s),s) \mid \#(\mu_{l}^{-1}(s))=1\}$ for some $(i,s) \in I \times S$ such that $\mu_{L}(i) P_i \mu _{l}(i)$ and $F_s(i) < F_s(\mu_{l}^{-1}(s))$ if $\#(\mu_{l}^{-1}(s)) = 1$.
\end{itemize}
\end{adefinition}

If there exists a farsighted improving path of length $L$ from $\mu$ to $\mu^{\prime}$, then we write $\mu \longrightarrow_{L} \mu^{\prime}$. For a given matching $\mu$ and some $L^{\prime }\geq 1$, let $\widehat{\phi}_{L^{\prime }}(\mu)$ be the set of matchings that can be reached from $\mu$ by a farsighted improving path of length $L\leq L^{\prime }$. That is, $\widehat{\phi}_{L^{\prime }}(\mu)=\{\mu^{\prime }\in \mathcal{M \mid \exists }L \leq L^{\prime }$ such that $\mu \longrightarrow _{L}\mu^{\prime }\}$. Let $\widehat{\phi}_{\infty }(\mu)=\{\mu^{\prime
}\in \mathcal{M \mid \exists }L \in \mathbb{N}$ such that $\mu \longrightarrow _{L} \mu^{\prime }\} = \phi_{\infty}(\mu)$ denote the set of matchings that can be reached from $\mu$ by some farsighted improving path. From Lemma 1 in Herings, Mauleon and Vannetelbosch (2019) we have that for every $L \geq 1$, for every $\mu \in \mathcal{M}$, it holds that $\widehat{\phi}_{L}(\mu)\subseteq \widehat{\phi}_{L+1}(\mu)$, and that for $L\geq k^{\star}$, for every $\mu \in \mathcal{M}$, it holds that $\widehat{\phi}_{L}(\mu)=\widehat{\phi}_{L+1}(\mu)=\widehat{\phi}_{\infty }(\mu)=\phi_{\infty }(\mu)$.

The set $\widehat{\phi}_{L}^{2}(\mu)=\widehat{\phi}_{L}(\widehat{\phi}_{L}(\mu))=\{\mu^{\prime \prime }\in \mathcal{M}\mid \exists \mu^{\prime }\in \widehat{\phi}_{L}(\mu)$ such that $\mu^{\prime \prime }\in \widehat{\phi}_{L}(\mu^{\prime })\}$ consists of those matchings that can be reached by a composition of two farsighted improving paths of length at most $L$ from $\mu$. For $m \in \mathbb{N}$, let $\widehat{\phi}_{L}^{m}(\mu)$ be the matchings that can be reached from $\mu$ by means of $m$ compositions of farsighted improving paths of length at most $L$. Let $\widehat{\phi}_{L}^{\infty }$ denote the set of matchings that can be reached from $\mu$ by means of any number of compositions of farsighted improving paths of length at most $L$.\footnote{From Lemma 2 in Herings, Mauleon and Vannetelbosch (2019) we have that for every $L\geq 1$, for every $\mu \in \mathcal{M}$, it holds that $\widehat{\phi}_{L}^{\infty }(\mu)\subseteq \widehat{\phi}_{L+1}^{\infty }(\mu)$, and that for $L\geq k^{\star}$, for every $\mu\in \mathcal{M}$, it holds that $\widehat{\phi}_{L}^{\infty }(\mu)=\widehat{\phi}_{L+1}^{\infty }(\mu)=\widehat{\phi}_{\infty }^{\infty }(\mu)$.}

The notion of a horizon-$L$ farsighted set is based on two main requirements: horizon-$L$ deterrence of external deviations and horizon-$L$ external stability. A set of matchings $V$ satisfies horizon-$L$ deterrence of external deviations if all possible deviations from any matching $\mu \in V$ to a matching outside $V$ are deterred by a threat of ending worse off or equally well off.\footnote{We use the notational convention that $\widehat{\phi}_{-1}(\mu)=\emptyset $ for every $\mu \in \mathcal{M}$.}

\begin{adefinition}\label{Definition3a} 
For $L \geq 1$, a set of matchings $V \subseteq \mathcal{M}$ satisfies horizon-$L$ deterrence of external deviations if for every $\mu \in V$,

\begin{description}
\item[(i)] $\forall $ $(i,s) \notin \mu$ such that $\widetilde{\mu}=\mu + (i,s) - \{(i,\mu(i)) \mid \mu(i) \neq i\} - \{(\mu^{-1}(s),s) \mid \#(\mu^{-1}(s))=1\}$ and $\widetilde{\mu} \notin V$, either there exists $ \mu^{\prime } \in \lbrack \widehat{\phi}_{L-2}(\widetilde{\mu}) \cap V] \cup \lbrack \widehat{\phi}_{L-1} (\widetilde{\mu}) \setminus \widehat{\phi}_{L-2}(\widetilde{\mu})]$ such that $\mu R_i \mu^{\prime }$, or $F_s(i) > F_s(\mu^{-1}(s))$ if $\#(\mu^{-1}(s)) = 1$, 

\item[(ii)] $\forall $ $(i,s) \in \mu$ such that $\widetilde{\mu}=\mu - (i,s)$ and $\widetilde{\mu} \notin V$, there exists $ \mu^{\prime } \in \lbrack \widehat{\phi}_{L-2}(\widetilde{\mu}) \cap V] \cup \lbrack \widehat{\phi}_{L-1} (\widetilde{\mu}) \setminus \widehat{\phi}_{L-2}(\widetilde{\mu})]$ such that $\mu R_i \mu^{\prime }$.
\end{description}
\end{adefinition}

Condition (i) in Definition~\ref{Definition3a} captures that forming the match $(i,s)$ at $\mu\in V$ and reaching a matching $\widetilde{\mu}$ outside of $V$, is deterred by the threat of ending in $\mu^{\prime }$. Here $\mu^{\prime }$ is such that either there is a farsighted improving path of length smaller than or equal to $L-2$ from $\widetilde{\mu}$ to $\mu^{\prime }$ and $\mu^{\prime }$ belongs to $V$ or there is a farsighted improving path of length equal to $L-1$ from $\widetilde{\mu}$ to $\mu^{\prime }$ and there is no farsighted improving path from $\widetilde{\mu}$ to $\mu^{\prime }$ of smaller length.\footnote{We distinguish farsighted improving paths of length less than or equal to $L-2$ after a deviation from $\mu$ to $\widetilde{\mu}$ and farsighted improving paths of length equal to $L-1$. In the former case, the reasoning capacity of the agent is not yet reached, and the threat of ending in $\mu^{\prime }$ is only credible if it belongs to the set $V$. In the latter case, the only way to reach $\mu^{\prime }$ from $\mu$ requires $L$ steps of reasoning or even more (i.e. one step in the deviation to $\widetilde{\mu}$ and at least $L-1$ additional steps in any farsighted improving path from $\widetilde{\mu}$ to $\mu^{\prime }$). Since this exhausts the reasoning capacity of the agent, the threat of ending in $\mu^{\prime }$ is credible, irrespective of whether it belongs to $V$ or not.} 

A set of matchings $V$ satisfies horizon-$L$ external stability if from any matching outside of $V$ there is a sequence of farsighted improving paths of length smaller than or equal to $L$ leading to some matching in $V$.

\begin{adefinition}\label{Definition3b} For $L\geq 1$, a set of matchings $V\subseteq \mathcal{M}$
satisfies horizon-$L$ external stability if for every $\mu^{\prime }\in \mathcal{M}\setminus V$, $\widehat{\phi}_{L}^{\infty }(\mu^{\prime })\cap V\neq \emptyset $.
\end{adefinition}

\begin{adefinition}\label{Definition3} For $L\geq 1$, a set of matchings $V \subseteq \mathcal{M}$ is a horizon-$L$ farsighted set if it is a minimal set satisfying horizon-$L$ deterrence of external deviations and horizon-$L$ external stability.
\end{adefinition}

From Herings, Mauleon, and Vannetelbosch (2019) we have that a horizon-$L$ farsighted set of matchings exists.

\begin{atheorem}
Let $\langle I,S,P,F\rangle$ be a priority-based matching problem and $\mu^T$ be the matching obtained from the Top Trading Cycles mechanism. The singleton set $\{\mu^T\}$ is a horizon-$L$ farsighted set for $L\geq 3 \gamma $.
\end{atheorem}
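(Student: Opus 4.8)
The plan is to verify the three defining requirements of a horizon-$L$ farsighted set (Definition \ref{Definition3}) for $V=\{\mu^T\}$: minimality, horizon-$L$ external stability (Definition \ref{Definition3b}), and horizon-$L$ deterrence of external deviations (Definition \ref{Definition3a}). Minimality is immediate, since the only proper subset of a singleton is $\emptyset$, and $\emptyset$ fails external stability because $\widehat{\phi}_{L}^{\infty}(\mu^{\prime})\cap\emptyset=\emptyset$.

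For horizon-$L$ external stability I would reuse the constructive path of Theorem \ref{T5}. That path visits the TTC cycles in order of formation and, for each cycle $c_t^l$, runs the three phases (match each agent to the object ranking her first, vacate one by one, then re-match to the TTC object), a block of at most $3\bar{c}_t^l-1\leq 3\gamma-1\leq L$ moves after which every agent of $c_t^l$ sits at $\mu^T$ and never moves again. Read as a \emph{farsighted} improving path of length $L$ (Definition \ref{DFIPL}), each block is legitimate: its terminal matching already places every mover of $c_t^l$ at her TTC object, which is her top choice among the objects still available at step $t$, so she strictly prefers the block's end to each of her intermediate positions; the priority requirement for the ``form'' moves holds because within $c_t^l$ each relevant object points to the mover. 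Concatenating the blocks exhibits $\mu^T\in\widehat{\phi}_{L}^{\infty}(\mu)$ for every $\mu\neq\mu^T$, which is horizon-$L$ external stability.

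The substantive part is deterrence. Take a deviation from $\mu^T$ reaching $\widetilde{\mu}\neq\mu^T$. If it destroys $(i,s)\in\mu^T$, then $i$ reclaims the now-vacant $s=\mu^T(i)$ in one move; if it forms $(i,s)$ with $s$ unassigned in $\mu^T$, non-wastefulness of TTC gives $\mu^T(i)\,R_i\,s$ and $i$ reclaims the vacated $\mu^T(i)$ in one move; in both cases the threat $\mu^{\prime}=\mu^T\in\widehat{\phi}_{L-2}(\widetilde{\mu})\cap V$ with $\mu^T R_i\mu^T$ deters it. If it forms $(i,s)$ with $F_s(i)>F_s(\mu^T(s))$, the priority clause of Definition \ref{Definition3a}(i) deters it outright. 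The only remaining case is a \emph{justified-envy} deviation, $i$ forming $(i,s)$ with $F_s(i)<F_s(\mu^T(s))$. Here I would again build a return path $\widetilde{\mu}\to\mu^T$ from the three-phase construction, restricted to the cycle $c$ containing $s$ and the step in which $i$ is removed by TTC. The key subtlety is that the displaced owner $j=\mu^T(s)$ cannot outbid $i$ for $s$ (lower priority); instead the higher-priority members of $c$ reclaim their priority objects around $c$ until $s$ is returned to $j$ and $i$ is forced to \emph{voluntarily vacate} $s$ (a ``destroy'' move, justified because the path ends at $\mu^T$). This return path has length at most $3\bar{c}\leq 3\gamma$, and—crucially for the bound—the deviator $i$ is parked at $\mu^T(i)$ no later than step $3\gamma-1$ and stays there afterwards.

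The final step is the credibility bookkeeping of Definition \ref{Definition3a}(i). When the affected cycle satisfies $\bar{c}\leq\gamma-1$, the return path reaches $\mu^T$ within $3\gamma-3\leq L-2$ steps, so $\mu^T\in\widehat{\phi}_{L-2}(\widetilde{\mu})\cap V$ is the deterring threat with $\mu^T R_i\mu^T$. The tight corner is $\bar{c}=\gamma$ with $L=3\gamma$: then $\mu^T$ may sit just beyond the horizon, so I would instead take $\mu^{\prime}$ to be the matching reached at distance $L-1=3\gamma-1$ along the return path; by the previous paragraph it already has $\mu^{\prime}(i)=\mu^T(i)$, hence $\mu^T R_i\mu^{\prime}$, and one checks $\mu^{\prime}\in\widehat{\phi}_{L-1}(\widetilde{\mu})\setminus\widehat{\phi}_{L-2}(\widetilde{\mu})$ so that it is credible irrespective of $V$. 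The hard part will be exactly this justified-envy case: producing the length-$\leq 3\gamma$ return path for an arbitrary affected cycle (with the deviator possibly lying in a later cycle) and matching the distance of the deterring threat to the window $[\widehat{\phi}_{L-2}\cap V]\cup[\widehat{\phi}_{L-1}\setminus\widehat{\phi}_{L-2}]$ at the boundary $L=3\gamma$. The extra ``$+1$'' over the bound $3\gamma-1$ of Theorem \ref{T5} is precisely the single deviation step that separates deterrence from internal stability.
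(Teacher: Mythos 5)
Your overall architecture is the same as the paper's: minimality is trivial for a singleton; horizon-$L$ external stability is obtained by cutting the improving path of Theorem \ref{T5} into per-cycle blocks of length at most $3\gamma-1\leq L$ and reading each block as a farsighted improving path; and deterrence is a case analysis in which destroy moves, moves to unassigned objects, and moves violating priority are deterred immediately, leaving the justified-envy deviation as the substantive case, which is handled by a return path built from Theorem \ref{T5}'s three-phase construction on the cycle of the displaced agent $j$. Your bookkeeping of where the threat must sit in $[\widehat{\phi}_{L-2}(\widetilde{\mu})\cap V]\cup[\widehat{\phi}_{L-1}(\widetilde{\mu})\setminus\widehat{\phi}_{L-2}(\widetilde{\mu})]$ is in fact more explicit than the paper's, which simply exhibits the threat $\mu^T-(i,\mu^T(i))\in\widehat{\phi}_{L-1}(\widetilde{\mu})$, reached in at most $3\gamma-1$ moves, and relegates the remaining cases to a footnote.

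There is, however, a genuine flaw in your mechanism for the key case. You claim that after the higher-priority members of the cycle reclaim their priority objects, the deviator $i$ is ``forced to voluntarily vacate $s$ (a destroy move, justified because the path ends at $\mu^T$).'' That destroy move is illegal exactly when it matters: if the deviation was profitable, i.e. $s\,P_i\,\mu^T(i)$, then at the moment of the proposed destroy move $i$ compares the end of the path, where she holds $\mu^T(i)$ (or is unassigned), with her current match $s$, and condition (i) of Definition \ref{DFIPL} fails. The correct mechanism, which is what the paper's appeal to Theorem \ref{T5}'s steps produces, is that $i$ is \emph{displaced}, not that she leaves. Write $j$'s cycle as $c=(s^1,i^1,s^2,\ldots,s^r,i^r)$ with $j=i^1$ and $s=s^2$. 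At $\widetilde{\mu}$ agent $i^1$ is unassigned, so she takes $s^1$ displacing $i^r$; then $i^r$ takes $s^r$ displacing $i^{r-1}$; and so on around the cycle until $i^2$, herself just displaced, takes $s^2$ and ejects the deviator. Every displacement is feasible because $s^h$ points to $i^h$ at the TTC step at which $c$ forms, while both the displaced cycle member and the deviator are still present at that step (the deviator is present because $s\,P_i\,\mu^T(i)$ forces her removal step to come later); and every mover has a strict incentive because she is unassigned when she moves. After this chain of $r$ moves the vacate/re-match phases run as in Theorem \ref{T5}, reaching $\mu^T-(i,\mu^T(i))$ in at most $3r-1\leq 3\gamma-1\leq L-1$ moves; the paper then uses this matching, at which the deviator is unassigned and hence strictly worse off than at $\mu^T$, as the deterring threat, rather than parking the deviator back at $\mu^T(i)$ as you do. Finally, note that for unprofitable form deviations (including justified-envy deviations by an agent removed before $s$, for whom the displacement chain need not work) the threat is simply $\mu^T\in\widehat{\phi}_{1}(\widetilde{\mu})\cap V$, since $\mu^T(i)\,P_i\,s$ lets the deviator return by herself. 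With these repairs your case analysis goes through and coincides with the paper's argument.
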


\begin{proof}
Take $L\geq 3 \gamma $. We show that $\{\mu^T\}$ is a horizon-$L$ farsighted set. First, $\{\mu^T\}$ is a minimal set. Second, $\{\mu^T\}$ satisfies horizon-$L$ deterrence of external deviations. Any deviation from $\mu^T$ to $\widetilde{\mu}=\mu^T - (i,s)$ is deterred since agent $i$ is worse off at $\widetilde{\mu}$ where $\widetilde{\mu}(i)=i$. An agent $i$ may only have incentives to deviate from $\mu^T$ to $\widetilde{\mu}=\mu^T + (i,s) - \{(i,\mu(i)) \mid \mu(i) \neq i\} - \{(\mu^{-1}(s),s) \mid \#(\mu^{-1}(s))=1\}$ if she matches to an object $s$ that was assigned in $\mu^T$ to some agent $j$ who belongs to a cycle formed before agent $i$'s cycle in the TTC algorithm. Hence, any deviation from $\mu^T$ to $\widetilde{\mu}=\mu^T + (i,s) - \{(i,\mu(i)) \mid \mu(i) \neq i\} - \{(\mu^{-1}(s),s) \mid \#(\mu^{-1}(s))=1\}$ can be deterred since $\mu^T -(i,\mu^T(i)) \in \widehat{\phi}_{L-1}(\widetilde{\mu})$. Indeed, from $\widetilde{\mu}$ the agents belonging to agent $j$'s cycle in the TTC algorithm can simply follow the steps of Theorem \ref{T5}'s proof to reach $\mu^T -(i,\mu^T(i))$, and this farsighted improving path counts at most $3 \gamma -1$ moves.\footnote{For $L > 3 \gamma $, $\mu^T \in \widehat{\phi}_{L-1}(\widetilde{\mu})$ since agent $i$ has incentives to match to $s=\mu^T(i)$ at $\mu^T -(i,\mu^T(i))$.} Third, the horizon-$k$ improving path from Theorem \ref{T5}'s proof can be decomposed in a succession of farsighted improving paths of length smaller than or equal to $3 \gamma -1$ where each farsighted improving path consists of the formation of the matches between the agents belonging to the same cycle in the TTC algorithm. Hence, for every $\mu \in \mathcal{M}\setminus \{\mu^T\}$, $\mu^T \in \widehat{\phi}_{L}^{\infty }(\mu)$, and so $\{\mu^T\}$ satisfies horizon-$L$ external stability.
\end{proof}

\subsection{The agents own the objects}

Closely related to priority-based matching problems are matching problems where the agents own the object. Let $\langle I,S,P,F\rangle$ be a matching problem where each agent $i$ owns an object $s$. The strict priority structure $F$ of the objects over the agents is such that the priority $F_s$ of object $s$ only ranks the owner of object $s$. Without loss of generality, let agent $i_l$ be the owner of object $s_l$, for $l =1, \ldots ,n$. Let $i_s$ be a generic agent who owns object $s$.

\begin{aexample}\label{Ex2} 
Consider a matching problem $\langle I,S,P,F\rangle$ with $I=\{i_1,i_2,i_3\}$ and $S=\{s_1,s_2,s_3\}$, and where agent $i_l$ owns object $s_l$, for $l=1,2,3$. Agents' preferences and endowments are as follows.
\begin{center}
\begin{tabular}{cccc}
\multicolumn{4}{c}{Agents}\\
\hline
Endowment & $s_1$& $s_2$ & $s_3$  \\
\hline
& $P_{i_1}$& $P_{i_2}$ & $P_{i_3}$  \\
\hline
& $s_3$ & $s_3$ & $s_2$   \\
& $s_1$ & $s_2$ & $s_3$   \\
& $s_2$ & $s_1$ & $s_1$   
\end{tabular}
~\qquad~
\begin{tabular}{ccc}
\multicolumn{3}{c}{Objects}\\
\hline
$F_{s_1}$& $F_{s_2}$ & $F_{s_3}$\\
\hline
$i_1$ & $i_2$ & $i_3$\\
 &  &  \\
 &  & 
    \end{tabular}
  \end{center}

\end{aexample}

In Example \ref{Ex2}, $\mu^T=\{(i_1,s_1),(i_2,s_3),(i_3,s_2)\}$ is the matching obtained from the TTC algorithm. In the first round of the TTC algorithm, there is one cycle where agent $i_2$ points to object $s_3$, object $s_3$ points its owner $i_3$, agent $i_3$ points to object $s_2$ and object $s_2$ points its owner $i_2$. That is, $C_1=\{c_1^1\}$ with $c_1^1=(s_3,i_3,s_2,i_2)$. Agent $i_2$ is assigned to object $s_3$ and agent $i_3$ is assigned to object $s_2$: $m_1^1=\{(i_2,s_3),(i_3,s_2)\}$, and so $i_2$ and $i_3$ exchange their objects. In the second round of the TTC algorithm, there is only one leftover agent, $i_1$, who points to object $s_1$ that she owns and one leftover object, $s_1$, that points to its owner $i_1$. That is, $C_2=\{c_2^1\}$ with $c_2^1=(s_1,i_1)$. Agent $i_1$ is assigned to her own object $s_1$: $m_2^1=\{(i_1,s_1)\}$, and so $\mu^T=m_1^1 \cup m_2^1$. 

Roth and Postlewaite (1977) show that, for any matching problem $\langle I,S,P,F\rangle$ where each agent $i$ owns an object $s$, there is always a unique matching that is in the core. Moreover, this matching can be obtained with the TTC algorithm.\footnote{For matching problems with private endowments, Ma (1994) shows that a mechanism is strategy-proof, Pareto efficient and individually rational if and only if it uses the TTC algorithm.} 

\begin{adefinition}\label{DMFLIO} 
Let $\langle I,S,P,F\rangle$ be a matching problem where each agent $i$ owns an object $s$. A horizon-$k$ improving path from a matching $\mu \in \mathcal{M}$ to a matching $\mu^{\prime } \in \mathcal{M} \setminus \{\mu \}$ is a finite sequence of distinct matchings $\mu_{0},\ldots ,\mu_{L}$ with $\mu_{0} = \mu$ and $\mu_{L} = \mu^{\prime }$ such that for every $l \in \{0,\ldots ,L-1\}$ either
\begin{itemize}
\item[(\textbf{i})]$ \mu_{l + 1} = \mu_{l} - (i,s)$ for some $(i,s) \in I \times S$ such that $\mu_{\min \{l+k,L\}}(i) P_i \mu _{l}(i)$, or
\item[(\textbf{ii})]$ \mu_{l + 1} = \mu_{l} + (i,s) - \{(i,\mu_{l}(i)) \mid \mu_{l}(i) \neq i\} - \{(\mu_{l}^{-1}(s),s) \mid \#(\mu_{l}^{-1}(s))=1\}$ for some $(i,s) \in I \times S$ such that $\mu_{\min \{l+k,L\}}(i) P_i \mu _{l}(i)$ and $\mu_{\min \{l+k,L\}}(i_s) P_{i_s} \mu _{l}(i_s)$.
\end{itemize}
\end{adefinition}

In the case of matching problems where each agent owns an object, we still require that, along a horizon-$k$ improving path, each time some agent $i$ is on the move she is better off at the match she will get $k$-steps ahead on the sequence compared to her current match. Moreover, if agent $i$ matches to $s$, we also require that the owner of the object (i.e. $i_s$) prefers the match he will get $k$-steps ahead compared to his current match. In other words, the owner of the object has a word to say about the assignment of his endowment to some agent.
 
The set of matchings $\mu^{\prime }\in \mathcal{M}$ such that there is a horizon-$k$ improving path from $\mu$ to $\mu^{\prime }$ is denoted by $\widetilde{\phi}_k(\mu)$, so $\widetilde{\phi}_k(\mu)=\{\mu^{\prime} \in \mathcal{M} \mid \mu \rightarrow_k \mu^{\prime} \}$. Replacing $\phi_k(\mu)$ by $\widetilde{\phi}_k(\mu)$ in Definition \ref{D6} we obtain the definition of a horizon-$k$ vNM stable set for matching problems where each agent owns an object.

\begin{atheorem}
Let $\langle I,S,P,F\rangle$ be a matching problem where each agent $i$ owns an object $s$ and $\mu^T$ is the matching obtained from the Top Trading Cycles mechanism. The singleton set $\{\mu^T\}$ is the unique horizon-$k$ vNM stable set for $k \geq 3 \gamma -1$.
\end{atheorem}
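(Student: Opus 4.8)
The plan is to prove the two assertions separately: that $\{\mu^T\}$ is a horizon-$k$ vNM stable set, and that it is the only one. For the first assertion internal stability is automatic because $\{\mu^T\}$ is a singleton, so everything reduces to external stability, i.e.\ to showing $\mu^T \in \widetilde{\phi}_k(\mu)$ for every $\mu \neq \mu^T$. For uniqueness I would isolate one extra ingredient: that no horizon-$k$ improving path leaves $\mu^T$, i.e.\ $\widetilde{\phi}_k(\mu^T) = \emptyset$. Everything else then follows from these two facts by the usual vNM bookkeeping.

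For external stability I would re-run, essentially verbatim, the step-by-step construction from the proof of Theorem \ref{T5}: process the TTC cycles in their order of formation, and inside each cycle execute the same three phases (every agent first grabs the object that ranks her first, all but one of them release it, and they then cascade one by one into their TTC objects). In the present model the object that ranks agent $i$ first is precisely the object she owns, so the genuinely new point is the owner-consent clause of Definition \ref{DMFLIO}(ii): each time a pair $(i,s)$ forms, the owner $i_s$ must also be moving toward a strictly better match. The construction is arranged so that this is never binding in a harmful way --- a match $(i,s)$ is created either when $i$ is taking her own endowment (so $i_s = i$ and the clause collapses to $i$'s own improvement condition) or after the owner $i_s$ has already relinquished $s$, so that $s$ is free and $i_s$ is no longer the holder being displaced. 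Tracking this clause carefully is the delicate bookkeeping of this direction. Since each cycle is cleared in at most $3\bar{c} - 1 \leq 3\gamma - 1$ moves, the horizon $k \geq 3\gamma - 1$ keeps each mover's TTC object inside her view exactly as in Theorem \ref{T5}, which delivers $\mu^T \in \widetilde{\phi}_k(\mu)$ for all $\mu \neq \mu^T$.

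The heart of the argument is $\widetilde{\phi}_k(\mu^T) = \emptyset$, which I would establish by induction on the order in which cycles form in the TTC algorithm, exploiting that here each object is owned by --- and in the TTC removed together with --- an agent of the same cycle. Agents of the first cycle obtain their overall most preferred objects, so they can never be made strictly better off and hence never initiate a move; moreover their objects can never be prised away, since forming a match with such an object would require the consent of its owner, who already sits at her best attainable allocation. Inductively, an agent $i$ of a later cycle strictly prefers to $\mu^T(i)$ only objects assigned in earlier cycles; each such object is owned by an earlier-cycle agent who, by the induction hypothesis, stays put at her TTC match and will not consent to release it. Consequently $i$ can neither form a match with a better object (owner consent fails) nor profitably destroy her current match (no strictly better object is ever attainable for her), so $i$ does not move either. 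Hence no agent moves away from $\mu^T$ and $\widetilde{\phi}_k(\mu^T) = \emptyset$. This is exactly the feature that is absent in the general priority model of Theorem \ref{T5}, where higher-priority agents \emph{can} leave $\mu^T$, and it is what makes uniqueness possible here.

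Uniqueness then follows quickly. Let $V$ be any horizon-$k$ vNM stable set. If $\mu^T \notin V$, external stability applied to $\mu^T$ would require $\widetilde{\phi}_k(\mu^T) \cap V \neq \emptyset$, contradicting $\widetilde{\phi}_k(\mu^T) = \emptyset$; hence $\mu^T \in V$. If some $\mu' \in V$ with $\mu' \neq \mu^T$ existed, then by the external-stability part above $\mu^T \in \widetilde{\phi}_k(\mu')$, contradicting internal stability of $V$; hence $V = \{\mu^T\}$. The main obstacle I anticipate is the induction yielding $\widetilde{\phi}_k(\mu^T) = \emptyset$ --- in particular making rigorous that the owners of every object an agent might covet are permanently ``locked'' at their TTC matches and never consent --- together with the careful verification of the owner-consent clause throughout the external-stability construction.
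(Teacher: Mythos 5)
Your proposal follows the paper's own proof essentially verbatim: the stable-set part re-runs the construction from the proof of Theorem \ref{T5} while tracking the owner-consent clause of Definition \ref{DMFLIO} (the paper handles this in a footnote, noting that along that path an agent only ever matches to her own endowment or to a vacated object), and uniqueness rests on showing $\widetilde{\phi}_k(\mu^T)=\emptyset$ by exactly the same induction on the order in which cycles form in the TTC algorithm, followed by the standard external/internal-stability bookkeeping. If anything, you make explicit a point the paper leaves implicit, namely that earlier-cycle agents' endowments can never be prised away because their owners, already holding their most preferred attainable objects, never satisfy the strict-improvement consent condition.
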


\begin{proof}
From the proof of Theorem \ref{T5} it follows that $\{\mu^T\}$ is a horizon-$k$ vNM stable set for $k \geq 3 \gamma -1$.\footnote{Notice that along the horizon-$k$ improving path from the proof of Theorem \ref{T5}, an agent matches either to an unassigned object whose owner is unmatched or to an object that she is the owner. Hence, the owner of the object does not block her move towards the TTC matching $\mu^T$.}  We now show that $\{\mu^T\}$ is the unique horizon-$k$ vNM stable set for $k \geq 3 \gamma -1$ since $\widetilde{\phi}_k(\mu^T)=\emptyset$. Consider any cycle that is obtained in the first step of the TTC algorithm. All the agents involved in this cycle obtain their most preferred object in $\mu^T$ and, in this cycle, any agent obtains the endowment of another agent who is also in the cycle. Hence, from $\mu^T$ they will never engage a move towards another matching. Consider now any cycle that is obtained in the second step of the TTC algorithm. Taking as fixed the matches done in $\mu^T$ by all agents involved in any cycle of the first step of the TTC, all the agents involved in this cycle of the second step of the TTC obtain their most preferred object in $\mu^T$ and, in this cycle, any agent obtains the endowment of another agent who is also in the cycle. Knowing that agents from any cycle of the first step of the TTC will never engage a move, agents from any cycle of the second step of the TTC will never engage either a move from $\mu^T$ towards another matching. Repeating this argument with the matches found in steps $3,4, \ldots $ of the TTC leads to the conclusion that $\widetilde{\phi}_k(\mu^T)=\emptyset$. Hence, any set $V \neq \{\mu^T\}$ would violate (ES) or (IS), and so $\{\mu^T\}$ is the unique horizon-$k$ vNM stable set for $k \geq 3 \gamma -1$.
\end{proof}

\begin{acorollary}\label{COROOWNER} 
Let $\langle I,S,P,F\rangle$ be a matching problem where each agent $i$ owns an object $s$ and $\mu^T$ is the matching obtained from the Top Trading Cycles mechanism. The singleton set $\{\mu^T\}$ is the unique vNM farsighted stable set.\footnote{In an exchange economy with indivisible goods of Shapley and Scarf (1974), Kawasaki (2010) as well as Klaus, Klijn and Walzl (2010) show that there exists a unique vNM farsighted stable set, which coincides with the set of competitive allocations. Thus, they obtain a similar result to Corollary \ref{COROOWNER} except that they allow for coalitional moves while agents can only move one at a time in our definition of vNM farsighted stable set.} 
\end{acorollary}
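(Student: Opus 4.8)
The plan is to reduce the claim to two facts about the owner-setting farsighted reachability operator $\widetilde{\phi}_{\infty}$ (obtained from Definition \ref{DMFLIO} by replacing the $k$-step look-ahead $\mu_{\min\{l+k,L\}}$ with the terminal matching $\mu_{L}$): (a) $\widetilde{\phi}_{\infty}(\mu^T)=\emptyset$, and (b) $\mu^T \in \widetilde{\phi}_{\infty}(\mu)$ for every $\mu \neq \mu^T$. Granting these, the rest is routine. Internal stability of $\{\mu^T\}$ is immediate from (a), since $\mu^T \notin \widetilde{\phi}_{\infty}(\mu^T)=\emptyset$, and external stability is precisely (b). For uniqueness, let $V$ be any vNM farsighted stable set. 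If $\mu^T \notin V$, external stability applied to $\mu^T$ forces $\widetilde{\phi}_{\infty}(\mu^T)\cap V \neq \emptyset$, contradicting (a); hence $\mu^T \in V$. If $V$ also contained some $\mu^{\prime}\neq \mu^T$, then (b) gives $\mu^T \in \widetilde{\phi}_{\infty}(\mu^{\prime})$ with $\mu^{\prime},\mu^T \in V$, violating internal stability. Therefore $V=\{\mu^T\}$.

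For (a) I would run the inductive core argument already used for the preceding theorem, now with the terminal matching in the role of the horizon. Every agent in a first-step TTC cycle obtains her most preferred object in $\mu^T$, so along any farsighted improving path starting at $\mu^T$ she can never be strictly better off at the terminal matching; hence she never destroys a match, and no object she owns can be reassigned either, because forming a match on such an object would require its (first-step) owner to be strictly better off at the terminal matching. Freezing these agents and their objects, the agents of the second-step cycles sit at their most preferred object among the objects not owned by first-step agents, so the same argument applies; iterating over the steps of the TTC yields $\widetilde{\phi}_{\infty}(\mu^T)=\emptyset$.

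For (b) I would build a single farsighted improving path from $\mu$ to $\mu^T$ by following the cycle-by-cycle construction of Theorem \ref{T5}, processing the TTC cycles in their order of formation and taking $\mu^T$ as the terminal matching. The mover condition is now the comparison with $\mu^T$ itself: when an agent $i$ of a cycle formed at step $t$ moves (grabbing the object she owns, releasing it, or grabbing her TTC object), her settled match is $\mu^T(i)$; since the objects owned by agents of earlier cycles have already been grabbed back, $\mu^T(i)$ is her most preferred object among those still available and is weakly preferred to her endowment by individual rationality of the TTC, hence strictly preferred to her current match. The owner condition is handled as in the preceding theorem: along the construction every agent matches only to an object she owns or to an unassigned object whose owner is momentarily unmatched, so the owner is never strictly worse off and never blocks the move.

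I expect the only delicate point to be the owner-consent requirement of Definition \ref{DMFLIO}(ii) in step (b): in the owner model an object can be reassigned only if its owner ends up strictly better off, and an owner already sitting at her TTC match would refuse to consent. This is exactly why the within-cycle moves must be sequenced so that each grab is either of an agent's own object or of a currently vacant object whose owner is also currently unmatched, which the release-then-rematch phases of Theorem \ref{T5} are designed to guarantee. A cleaner alternative to both (a) and (b) is to invoke the preceding theorem together with the finiteness fact (exactly as stated for $\phi$ in the text) that $\widetilde{\phi}_{k}(\mu)=\widetilde{\phi}_{\infty}(\mu)$ for all sufficiently large $k$: since $\{\mu^T\}$ is a horizon-$k$ vNM stable set with $\widetilde{\phi}_{k}(\mu^T)=\emptyset$ for every $k \geq 3\gamma-1$, passing to such a large $k$ delivers (a) and (b) simultaneously.
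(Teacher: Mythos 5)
Your proposal is correct and is essentially the paper's own derivation: the paper offers no separate proof of this corollary, obtaining it exactly as in your ``cleaner alternative'' --- finiteness of $\mathcal{M}$ gives $\widetilde{\phi}_{k}(\mu)=\widetilde{\phi}_{\infty}(\mu)$ for all $k$ large enough, so the preceding theorem ($\{\mu^T\}$ is the unique horizon-$k$ vNM stable set for $k\geq 3\gamma-1$) transfers verbatim to the farsighted operator for $k \geq \max\{3\gamma-1,k^{\star}\}$, yielding both $\widetilde{\phi}_{\infty}(\mu^T)=\emptyset$ and $\mu^T \in \widetilde{\phi}_{\infty}(\mu)$ for every $\mu \neq \mu^T$. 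Your direct route (a)/(b) merely inlines that theorem's two ingredients --- the step-by-step freezing induction over TTC cycles for emptiness, and the Theorem~\ref{T5} path with owner consent handled exactly as in the paper's footnote --- so it is the same argument unrolled rather than a different one.
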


\subsection{Conclusion}

We have considered priority-based matching problems with limited farsightedness. We have shown that, once agents are sufficiently farsighted, the matching obtained from the TTC algorithm becomes stable: a singleton set consisting of the TTC matching is a horizon-$k$ vNM stable set if the degree of farsightedness is greater than three times the number of agents in the largest cycle of the TTC. On the contrary, the matching obtained from the DA algorithm may not belong to any horizon-$k$ vNM stable set for $k$ large enough.

Hence, the TTC mechanism satisfies Pareto efficiency, strategy-proofness and (limited) farsighted stability. Notice that a mechanism is strategy-proof if no agent has incentives to misrepresent her preferences anticipating perfectly the outcome of the TTC algorithm. So, strategy-proofness implicitly presumes some degree of farsightedness on behalf of the agents. Thus, it seems more consistent to look for a mechanism that satisfies strategy-proofness together with (limited) farsighted  stability.

\section*{Acknowledgments}

Ana Mauleon and Vincent Vannetelbosch are, respectively, Research Director and Senior Research Associate of the National Fund for Scientific Research (FNRS). Financial support from the Fonds de la Recherche Scientifique - FNRS PDR research grant T.0143.18 is gratefully acknowledged. Ata Atay is a Serra H\'{u}nter Fellow (Professor Lector Serra H\'{u}nter). Ata Atay gratefully acknowledges financial support from the University of Barcelona through grant AS017672.



\section*{References}

\begin{description}


\item Abdulkadiro\u{g}lu, A., Y.K. Che, P. A. Pathak, A.E. Roth, and O. Tercieux (2020),
\textquotedblleft Efficiency, justified envy, and incentives in priority-based matching\textquotedblright , \textit{American Economic Review Insights} 2, 425-42.

\item Abdulkadiro\u{g}lu, A., and T. Sönmez (2003), \textquotedblleft School choice: A mechanism design approach\textquotedblright , \textit{American Economic Review} 93, 729-747.



\item Atay, A., A. Mauleon, and V. Vannetelbosch (2022), \textquotedblleft School choice with farsighted students\textquotedblright , CORE/LIDAM Discussion Paper 2022/25, UCLouvain, Louvain-la-Neuve, Belgium.




\item Che, Y.K., and O. Tercieux (2019), \textquotedblleft Efficiency and stability in large matching markets\textquotedblright , \textit{Journal of Political Economy} 127, 2301-2342.

\item Chwe, M. S.-Y. (1994), \textquotedblleft Farsighted coalitional stability\textquotedblright , \textit{Journal of Economic Theory} 63, 299-325.



\item Do\u{g}an, B., and L. Ehlers (2021), \textquotedblleft Minimally unstable Pareto improvements over deferred acceptance\textquotedblright , \textit{Theoretical Economics} 16, 1249-1279.

\item Do\u{g}an, B., and L. Ehlers (2022), \textquotedblleft Robust minimal instability of the top trading cycles mechanism\textquotedblright , \textit{American Economic Journal: Microeconomics} 14, 556-582.

\item Dubins, L.E., and D.A. Freedman (1981), \textquotedblleft Machiavelli and the Gale-Shapley algorithm\textquotedblright , \textit{American Mathematical Monthly} 88, 485-494.


\item Dutta, B., and R. Vohra (2017), \textquotedblleft Rational expectations and farsighted stability\textquotedblright , \textit{Theoretical Economics} 12, 1191-1227.

\item Ehlers, L. (2007), \textquotedblleft Von Neumann-Morgenstern stable sets in matching problems\textquotedblright ,\ \textit{Journal of Economic Theory} 134, 537-547.


\item Gale, D., and L.S. Shapley (1962), \textquotedblleft College admissions and the stability of marriage\textquotedblright , \textit{American Mathematical Monthly} 69, 9-15.


\item Haeringer, G. (2017), Market design: auctions and matching, MIT Press, Cambridge, MA.

\item Hakimov, R., and O. Kesten (2018), \textquotedblleft The equitable top trading cycles mechanism for school choice\textquotedblright , \textit{International Economic Review} 59, 2219-2258.



\item Herings, P.J.J., A. Mauleon, and V. Vannetelbosch (2017), \textquotedblleft Stable sets in matching problems with coalitional sovereignty and path dominance\textquotedblright , \textit{Journal of Mathematical Economics} 71, 14-19.

\item Herings, P.J.J., A. Mauleon, and V. Vannetelbosch (2019), \textquotedblleft Stability of networks under horizon-$K$ farsightedness\textquotedblright , \textit{Economic Theory} 68, 177-201.

\item Herings, P.J.J., A. Mauleon, and V. Vannetelbosch (2020), \textquotedblleft Matching with myopic and farsighted players\textquotedblright , \textit{Journal of Economic Theory} 190, 105125.

\item Kawasaki, R. (2010), \textquotedblleft Farsighted stability of the competitive allocations in an exchange economy with indivisible goods\textquotedblright , \textit{Mathematical Social Sciences} 9 (2010) 46-52.

\item Kesten, O (2006), \textquotedblleft On two competing mechanisms for priority-based allocation problems \textquotedblright , \textit{Journal of Economic Theory} 127, 155-171.

\item Kesten, O (2010), \textquotedblleft School choice with consent \textquotedblright , \textit{The Quarterly Journal of Economics} 125, 1297-1348.


\item Klaus, B., F. Klijn, and M. Walzl (2010), \textquotedblleft Farsighted house allocation\textquotedblright , \textit{Journal of Mathematical Economics} 46, 817-824.

\item Luo, C., A. Mauleon, and V. Vannetelbosch (2021), \textquotedblleft Network formation with myopic and farsighted players\textquotedblright , \textit{Economic Theory} 71, 1283-1317.

\item Ma, J. (1994), \textquotedblleft Strategy-proofness and the strict core in a market with indivisibilities\textquotedblright , \textit{International Journal of Game Theory} 23, 75-83. 


\item Mauleon, A., V. Vannetelbosch and W. Vergote (2011), \textquotedblleft Von Neumann - Morgenstern farsightedly stable sets in two-sided matching\textquotedblright , \textit{Theoretical Economics} 6, 499-521.

\item Morrill, T. (2015), \textquotedblleft Two simple variations of top trading cycles\textquotedblright , \textit{Economic Theory} 60, 123-140.





\item Ray, D. and R. Vohra (2015), \textquotedblleft The farsighted stable set\textquotedblright , \textit{Econometrica} 83, 977-1011.

\item Ray, D., and R. Vohra (2019), \textquotedblleft Maximality in the farsighted stable set\textquotedblright , \textit{Econometrica} 87, 1763-1779.

\item Reny, P.J. (2022), \textquotedblleft Efficient matching in the school choice problem\textquotedblright , \textit{American Economic Review} 112, 2025-43.

\item Roth, A.E. (1982), \textquotedblleft The economics of matching: stability and incentives\textquotedblright , \textit{Mathematics of Operations Research} 7, 617-628.

\item Roth, A.E. and A. Postlewaite (1977), \textquotedblleft Weak versus strong domination in a market with indivisible goods \textquotedblright , \textit{Journal of Mathematical Economics} 4, 131-137.

\item Roth, A.E. and M.A.O. Sotomayor (1990), Two-sided matching: a study in game-theoretic modeling and analysis, Econometric Society Monographs No.18, Cambridge University Press, Cambridge, UK.


\item Shapley, L.S., and H. Scarf (1974), \textquotedblleft On cores and indivisibility\textquotedblright , \textit{Journal of Mathematical Economics} 1, 23-37.


\end{description}

\end{document}